\mathchardef\dash="2D
\renewcommand{\vec}[1]{\boldsymbol{\mathbf{#1}}}
\DeclareMathOperator{\E}{\mathbb{E}}
\renewcommand{\Re}{\mathbb{R}}
\renewcommand{\L}{\mathcal{L}}
\newcommand{\X}{\mathcal{X}}
\newcommand{\Y}{\mathcal{Y}}
\newcommand{\poly}{\mathrm{poly}}
\newcommand{\score}{\text{$s$-score}}
\newcommand{\cscore}{\text{C-score}}
\newcommand{\EXP}{\textsc{EXP3}}
\newcommand{\Hedge}{\textsc{Hedge}}
\renewcommand{\u}{\ell}
\newcommand{\U}{L}
\newtheorem{theorem}{Theorem}[section]
\newtheorem{lemma}[theorem]{Lemma}
\newtheorem{proposition}{Proposition}[section]
\newtheorem{definition}[theorem]{Definition}
\newcommand\numberthis{\addtocounter{equation}{1}\tag{\theequation}}
\DeclareMathOperator{\wins}{>}
\author[]{Nika Haghtalab}
\author[]{Ritesh Noothigattu}
\author[]{Ariel D. Procaccia}
\affil[]{Carnegie Mellon University\\\small\texttt{ \{nhaghtal,rnoothig,arielpro\}@cs.cmu.edu}
	}
\date{}
\begin{document}

\title{Weighted Voting Via No-Regret Learning}

\maketitle

\allowdisplaybreaks

\begin{abstract}
Voting systems typically treat all voters equally. We argue that perhaps they should not: Voters who have supported good choices in the past should be given higher weight than voters who have supported bad ones. To develop a formal framework for desirable weighting schemes, we draw on \emph{no-regret learning}. Specifically, given a voting rule, we wish to design a weighting scheme such that applying the voting rule, with voters weighted by the scheme, leads to choices that are almost as good as those endorsed by the best voter in hindsight. We derive possibility and impossibility results for the existence of such weighting schemes, depending on whether the voting rule and the weighting scheme are deterministic or randomized, as well as on the social choice axioms satisfied by the voting rule. 
\end{abstract}

\section{Introduction}
\label{sec:intro}

In most elections, voters are entitled to equal voting power. This principle underlies the \emph{one person, one vote} doctrine, and is enshrined in the United States Supreme Court ruling in the \emph{Reynolds v. Sims} (1964) case. 

But there are numerous voting systems in which voters do, in fact, have different \emph{weights}. Standard examples include the European Council, where (for certain decisions) the weight of each member country is proportional to its population; and corporate voting procedures where stockholders have one vote per share. Some historical voting systems are even more pertinent: Sweden's 1866 system weighted voters by wealth, giving especially wealthy voters as many as 5000 votes; and a Belgian system, used for a decade at the end of the 19th Century, gave (at least) one vote to each man, (at least) two votes to each educated man, and three votes to men who were both educated and wealthy~\cite{Cong11}. 

The last two examples can be seen as (silly, from a modern viewpoint) attempts to weight voters by \emph{merit}, using wealth and education as measurable proxies thereof. We believe that the basic idea of weighting voters by merit does itself have merit. But we propose to measure a voter's merit by the \emph{quality of his past votes}. That is, a voter who has supported good choices in the past should be given higher weight than a voter who has supported bad ones.  

This high-level scheme is, arguably, most applicable to \emph{repeated aggregation of objective opinions}. For example, consider a group of engineers trying to decide which prototype to develop, based on an objective measure of success such as projected market share. If an engineer supported a certain prototype and it turned out to be a success, she should be given higher weight compared to her peers in future decisions; if it is a failure, her weight should lower. Similar examples include a group of investors selecting companies to invest in; and a group of decision makers in a movie studio choosing movie scripts to produce. Importantly, the recently launched, not-for-profit website \href{http://robovote.org}{RoboVote.org} already provides public access to voting tools for precisely these situations, albeit using methods that always treat all voters equally~\cite{PSZ16}. 

Our goal in this paper, therefore, is to augment existing voting methods with weights, in a way that keeps track of voters' past performance, and guarantees good choices over time. The main conceptual problem we face is the development of a formal framework in which one can reason about desirable weighting schemes; in three words, our solution is \emph{no-regret learning}.

\subsection{Our Approach}

The most basic no-regret learning model involves a set of $n$ \emph{experts}. In each round $t=1,\ldots,T$, the algorithm chooses an expert at random, with probability proportional to their current weights. Then the loss of each expert $i$ at round $t$ is revealed, and the algorithm incurs the expected loss corresponding to its randomized choice. The overall loss (across $T$ rounds) of the algorithm, and of each expert, is defined by summing up the per-round losses. The algorithm's goal is to incur an overall loss that is comparable to the best expert \emph{in hindsight}. Specifically, under a \emph{no-regret learning algorithm}, the \emph{average} (per-round) difference between the algorithm's loss and the loss of the best expert goes to $0$ as $T$ goes to infinity. 

We depart from the classic setting in several ways --- some superficial, and some fundamental. Instead of experts, we have a set of $n$ \emph{voters}. In each round, each voter reveals a \emph{ranking} over a set of alternatives,\footnote{The alternatives can change across rounds, and even their number may vary.} and the loss of each alternative is revealed. In addition, we are given a (possibly randomized) \emph{voting rule}, which receives weighted rankings as input, and outputs the winning alternative. The voting rule is not part of our design space; it is exogenous and fixed throughout the process. The loss of a voter in round $t$ is given by assigning his ranking all the weight (equivalently, imagining that all voters have that ranking), applying the voting rule, and measuring the loss of the winning alternative (or the expected loss, if the rule is randomized). As in the classic setting, our benchmark is the best \emph{voter} in hindsight. 

At first glance, it may seem that our setting easily reduces to the classic one, by treating voters as experts. But our loss is computed by applying the given voting rule to the entire profile of weighted rankings, and therein lies the rub. To develop some intuition, consider the case of two alternatives $a$ and $b$, and the weighted majority rule, which selects $a$ if the total weight of voters who rank $a$ above $b$ is greater than $1/2$, and $b$ otherwise. Suppose that at round $t$, the loss of $a$ is $0$, the loss of $b$ is $1$, and the vote profile and weighting scheme are such that voters ranking $a$ above $b$ have a total weight of $1/2+\epsilon$. Consequently, the rule selects $a$, and our loss at round $t$ is exactly $0$. But if we perturbed the weights slightly, $b$ would be selected, and our loss would jump to $1$. By contrast, in the classic setting the algorithm's loss is obviously continuous in the weights assigned to experts.

An obvious question at this point is whether there is a weighting scheme that would allow us to compete with the best voter in hindsight, \emph{under the weighted majority rule}. Our main research question is much more general:
\begin{quote}
\emph{For which voting rules is there a weighting scheme such that the difference between our average per-round loss and that of the best voter goes to zero as the number of rounds goes to infinity?}
\end{quote}
Ironically, the very formulation of this technical question gives a first answer to our original conceptual question: A desirable weighting scheme, with respect to a given voting rule, is one that gives no-regret guarantees.

\subsection{Our Results}

Analogously to the learning literature, we consider two settings that differ in the type of feedback we receive in each time step, which we can use to adjust the voters' weights. In the \emph{full information} setting, we are informed of the loss of \emph{each} alternative. This would be the case, for example, if the alternatives are companies to invest in. By contrast, in the \emph{partial information} setting, we are only privy to the loss of the selected alternative. This type of feedback is appropriate when the alternatives are product prototypes: we cannot know how successful an undeveloped prototype would have been, but obviously we can measure the success of a prototype that was selected for development. 

In Section~\ref{sec:non-det-wts}, we devise no-regret weighting schemes for both settings, and for any voting rule. Specifically, in the full information setting, we show that for any voting rule there is a weighting scheme with regret $O(\sqrt{T \ln(n)})$; in the partial information setting, the regret guarantee is $O(\sqrt{T n\ln(n)})$. While these results make no assumptions on the voting rule, they also impose no restrictions on the weighting scheme. In particular, the foregoing weighting schemes heavily rely on randomization, that is, they are allowed to sample a weight vector from a distribution in each time step. 

However, deterministic weighting schemes seem more desirable, as they are easier to interpret and explain: a voter's weight depends only on past performance, and not on random decisions made by the scheme. In Section~\ref{sec:det-wts}, therefore, we restrict our attention to deterministic weighting schemes. We find that if the voting rule is itself deterministic, it admits a no-regret weighting scheme if and only if it is \emph{constant on unanimous profiles}. Because this property is not satisfied by any reasonable rule, the theorem should be interpreted as a strong impossibility result. We next consider randomized voting rules, and find that they give rise to much more subtle results, which depend on the properties of the voting rule in question. Specifically, we show that if the voting rule is a \emph{distribution over unilaterals} --- a property satisfied by \emph{randomized positional scoring rules} --- then it admits a deterministic no-regret weighting scheme. By contrast, if the voting rule satisfies a probabilistic version of the famous \emph{Condorcet consistency} axiom, then no-regret guarantees are impossible to achieve through a deterministic weighting scheme.

\subsection{Related Work}

\citet{BM07} provide an excellent overview of basic models and results in no-regret learning; throughout the paper we rely on some important technical results in this space~\cite{freund1995desicion,ACFS02}. Conceptually, our work is superficially related to papers on online ranking, where the algorithm chooses a ranking of objects at each stage. These papers differ from each other in how the loss function is defined, and the type of feedback used. For example, in the model of \citet{RKJ08}, the loss is $0$ if among the top $k$ objects in the ranking there is at least one that is ``relevant'', and $1$ otherwise. \citet{CT15} assume there is a relevance score for each object, and the loss of a ranking is calculated through one of several common measures; the twist is that the algorithm only observes the relevance of the top-ranked object, which is insufficient to even compute the loss of the ranking that it chose (i.e., it is incomparable to bandit feedback). Our setting is quite different, of course: While voters have rankings, our loss is determined by aggregating these rankings via a voting rule. And instead of outputting a ranking over alternatives, our algorithm can only output weights over voters.

We also draw connections to the \emph{computational social choice}~\cite{BCEL+16} literature throughout the paper~\cite{Gib77,CS06,Pro10,Moul83}. For now let us just point to a few papers that share some of the features of our problem. Specifically, there is a significant body of work on weighted voting, in the context of manipulation, control, and bribery in elections~\cite{CSL07,ZPR09,FHH09,FHH15}. And there are papers that study repeated (or \emph{dynamic}) voting~\cite{BP12,PP13}, albeit in settings where the preferences of voters evolve over time.

\section{Preliminaries}

Our work draws on social choice theory and online learning. In this section we present important concepts and results from each of these areas in turn. 

\subsection{Social Choice}
\label{sec:prem-sc}

We consider a set $[n]\triangleq\{1,\ldots,n\}$ of \emph{voters} and a set $A$ of $m$ \emph{alternatives}.
A \emph{vote} $\sigma:A\rightarrow [m]$ is a linear ordering  --- a ranking or permutation --- of the alternatives.
That is, for any vote $\sigma$ and alternative $a$,  $\sigma(a)$ denotes the position of alternative $a$ in vote $\sigma$.
For any $a, b\in A$, $\sigma(a) < \sigma(b)$ indicates that alternative $a$ is preferred to $b$ under vote $\sigma$. We also denote this preference by $a \succ_\sigma b$.
We denote the set of all $m!$ possible votes over $A$ by $\L(A)$.

A \emph{vote profile} $\vec \sigma \in \L(A)^n$ denotes the votes of $n$ voters. 
Furthermore, 
given a vote profile $\vec \sigma \in \L(A)^n$ and a weight vector $\vec w\in \Re_{\geq 0}^n$, we define the \emph{anonymous vote profile corresponding to $\vec \sigma$ and $\vec w$},  denoted $\vec \pi\in [0,1]^{|\L(A)|}$, by setting
\[  \pi_{\sigma} \triangleq \frac{1}{\|\vec w\|_1} \sum_{i=1}^n w_i \mathds{1}_{(\sigma_i = \sigma)}, \quad \forall \sigma \in \L(A).
\]
That is, $\vec \pi$ is an $|\L(A)|$-dimensional vector such that for each vote $\sigma \in \L(A)$, $\pi_\sigma$ is the fraction of the total weight on $\sigma$. 
When  needed, we use $\vec \pi_{\vec \sigma, \vec w}$ to clarify the vote profile and weight vector to which the anonymous vote profile corresponds to. Note that $\vec \pi_{\vec \sigma, \vec w}$ only contains the anonymized information about $\vec \sigma$ and $\vec w$, i.e., the anonymous  vote profile remains the same even when the identities of the voters change.

To aggregate the (weighted) votes into a distribution over alternatives, we next introduce the concept of (anonymous) voting rules. Let $\Delta(\L(A))$ be the set of all possible anonymous vote profiles. Similarly, let $\Delta(A)$ denote the set of all possible distributions over $A$.
An anonymous \emph{voting rule} is a function $f: \Delta(\L(A)) \rightarrow \Delta(A)$ that takes as input an anonymous  vote profile $\vec \pi$ and returns a distribution over the alternatives indicated by a vector $f(\vec \pi)$, where $f(\vec \pi)_a$ is the probability that alternative $a$ is the winner under $\vec \pi$. We say that a voting rule $f$ is \emph{deterministic} if for any $\vec \pi \in \Delta(\L(A))$, $f(\vec \pi) $ has support of size $1$, i.e., there is a unique winner.

One class of anonymous voting rules use the positions of the individual alternatives in order to determine the winners. These rules, collectively called \emph{positional scoring rules}, are defined by a scoring vector $\vec s$ such that $s_1 \geq s_2\geq \dots \geq s_m \geq 0$.
Given a vote $\sigma$, the score of alternative $a \in A$ in $\sigma$ is the score of its position in $\sigma$, i.e., $s_{\sigma(a)}$.
Given an anonymous vote profile $\vec \pi$, 
the score of an alternative is its  overall score in the rankings of $\vec \pi$, that is,
\[ \score_{\vec \pi}(a) \triangleq \sum_{\sigma \in \mathcal{L}(A)} \pi_\sigma s_{\sigma(a)}.
\]
A \emph{deterministic positional scoring rule} chooses the alternative with the highest score, i.e., 
$f(\vec \pi) = \vec e_{a^*}$, where $a^* \in \arg\max_{a \in A} \score_{\vec \pi}(a)$ (tie breaking may be needed). 
On the other hand, a \emph{randomized positional scoring rule} chooses each alternative with probability proportional to its score, i.e., $f(\vec \pi)_a \propto  \score_{\vec \pi}(a)$ for all $a\in A$. Examples of positional scoring rules include \emph{plurality} with $\vec s = (1, 0, \dots, 0)$,   \emph{veto} with $\vec s = (1, \dots,1, 0)$, and  \emph{Borda} with $\vec s = (m-1, m-2,  \dots, 0)$.

Another class of anonymous voting rules use pairwise comparisons between the alternatives to determine the winners.
We are especially interested in the \emph{Copeland} rule, which assigns a score to each alternative based on the number of pairwise majority contests it wins. In an anonymous vote profile $\vec \pi$, we denote by $a  \wins_{\vec \pi} b$ the event that $a$ beats $b$ in a pairwise competition, i.e., $a$ is preferred to $b$ in rankings in $\vec \pi$ that collectively have more than half the weight.  More formally, $\sum_{\sigma \in \L(A)} \pi_\sigma \mathds{1}_{(a \succ_\sigma b)} > 1/2$. We also write $a=_{\vec \pi} b$ if they are tied, i.e.,  $\sum_{\sigma \in \L(A)} \pi_\sigma \mathds{1}_{(a \succ_\sigma b)} = 1/2$.
The Copeland score\footnote{Some refer to this variant of Copeland as $\text{Copeland}_{1/2}$~\cite{FHS08}.} of an alternative is defined by
\[   \cscore_{\vec \pi}(a) \triangleq \left|  \{ b\in A \mid a\wins_{\vec \pi} b \}  \right| + \frac 1 2 \cdot \left|  \{ b\in A \mid a=_{\vec \pi} b \}  \right|.
\]  
The \emph{deterministic Copeland rule} chooses the alternative that has the highest Copeland score (possibly breaking ties), and the \emph{randomized Copeland rule} chooses each alternative with  probability proportional to its Copeland score.

The deterministic Copeland rule satisfies a classic social choice axiom, which we present next. We say that $a\in A$ is a \emph{Condorcet winner} in the vote profile $\vec \pi$ if $a\wins_{\vec \pi} b$ for all $b\in A\setminus\{a\}$. A voting rule is \emph{Condorcet consistent} if it selects a Condorcet winner whenever one exists in the given vote profile. Note that the Copeland score of a Condorcet winner is $m-1$, whereas the Copeland score of any other alternative must be strictly smaller, so a Condorcet winner (if one exists) indeed has maximum Copeland score. 

An anonymous deterministic voting rule $f$ is called \emph{strategyproof}
if for any voter $i\in [n]$, any two vote profiles $\vec \sigma$ and $\vec \sigma'$ for which $\sigma_j = \sigma'_j$ for all $j\neq i$,  
and any weight vector $\vec w$, it holds that either $a = a'$ or $a  \succ_{\sigma_i} a'$, where $a$ and $a'$ are the winning alternatives in $f(\vec \pi_{\vec \sigma, \vec w})$ and $f(\vec \pi_{\vec \sigma', \vec w})$ respectively. In words, whenever a voter reports $\sigma_i'$ instead of $\sigma_i$, the outcome does not improve according to the true ranking $\sigma_i$.
While strategyproofness is a natural property to be desired in a voting rule, the celebrated Gibbard-Satterthwaite Theorem~\cite{Gib73,Sat75} shows that non-dictatorial strategyproof deterministic voting rules do not exist.\footnote{The theorem also requires a range of size at least $3$.}
Subsequently,  \citet{Gib77} extended this result to randomized voting rules.
Before presenting his extension, we introduce some additional definitions. 

Given a \emph{loss function} over the alternatives denoted by a vector $\vec \u \in [0,1]^m$, the expected loss of the alternative
chosen by the rule $f$ under an anonymous vote profile $\vec \pi$ is 
\[  \U_f(\vec \pi, \vec \u) \triangleq \E_{a\sim f(\vec \pi) } [ \u_a]  = f(\vec \pi) \cdot \vec \u.
\]
The higher the loss, the worse the alternative. We say that the loss function $\vec \ell$ is \emph{consistent} with vote $\sigma\in \L(A)$ if for all $a,b\in A$, $a \succ_\sigma b \Leftrightarrow \ell_a < \ell_b$. 
An anonymous randomized rule $f$ is \emph{strategyproof}
if for any voter $i\in [n]$, any two vote profiles $\vec \sigma$ and $\vec \sigma'$ for which $\sigma_j = \sigma'_j$ for all $j\neq i$, any weight vector $\vec w$, 
and any loss function $\vec \u$ that is consistent with $\sigma_i$,  we have $\U_f(\vec \pi_{\vec \sigma, \vec w}, \vec \u) \leq \U_f(\vec \pi_{\vec \sigma', \vec w}, \vec \u)$.

The next proposition is an interpretation of a result of \citet{Gib77} on the structural property shared by all strategyproof randomized voting rules, applied to anonymous voting rules.

\begin{proposition}
Any  strategyproof randomized rule is a distribution over a collection of the following types of rules:
\begin{enumerate}
\item Anonymous Unilaterals: $g$ is an anonymous unilateral if there exists a function $h:\L(A)\rightarrow A$ for which
\[g(\vec \pi) = \sum_{\sigma\in \L(A)} \pi_\sigma \vec e_{h(\sigma)}.
\]
\item Duple:  $g$ is a duple rule if  $|\{a \mid \exists \vec \pi \text{ such that } g(\vec \pi)_a \neq 0 \}| \leq 2$. 

\end{enumerate}
\end{proposition}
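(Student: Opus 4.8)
The plan is to reduce to the classical (non-anonymous) form of Gibbard's theorem on randomized voting rules and then symmetrize. Fix the voter set $[n]$ and regard the anonymous rule $f$ as an ordinary randomized voting rule $\tilde f:\L(A)^n\to\Delta(A)$ by $\tilde f(\vec\sigma)\triangleq f(\vec\pi_{\vec\sigma,\vec{1}})$, i.e.\ by feeding in the profile with all voters given equal weight. The hypothesis that $f$ is strategyproof with respect to \emph{every} weight vector $\vec w$ --- in particular $\vec w=\vec{1}$ --- says exactly that $\tilde f$ satisfies the dominance condition that defines strategyproofness for randomized rules. The characterization of \citet{Gib77} then produces coefficients $\lambda_k\ge 0$ with $\sum_k\lambda_k=1$ and rules $\tilde g_k$ with $\tilde f=\sum_k\lambda_k\tilde g_k$, where each $\tilde g_k$ is either a duple rule (range of size at most $2$) or a \emph{unilateral}, meaning $\tilde g_k(\vec\sigma)=G_k(\sigma_{i_k})$ for some voter $i_k\in[n]$ and some map $G_k:\L(A)\to\Delta(A)$.

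Next I would average over voter permutations. For $\rho\in S_n$ and a rule $\tilde h$, write $\tilde h^{\rho}(\vec\sigma)\triangleq \tilde h(\sigma_{\rho(1)},\dots,\sigma_{\rho(n)})$. Since $f$ only sees the anonymous profile, $\tilde f^{\rho}=\tilde f$ for all $\rho$, so
\[
\tilde f \;=\; \frac{1}{n!}\sum_{\rho\in S_n}\tilde f^{\rho}\;=\;\sum_k \lambda_k\Bigl(\tfrac{1}{n!}\sum_{\rho\in S_n}\tilde g_k^{\rho}\Bigr),
\]
and it remains to identify the symmetrized blocks $\bar g_k\triangleq \tfrac1{n!}\sum_\rho\tilde g_k^{\rho}$. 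If $\tilde g_k$ is a duple, then every $\tilde g_k^{\rho}$ has the same range as $\tilde g_k$ (relabeling voters does not change which alternatives can ever win), so $\bar g_k$ is still supported on at most two alternatives; being permutation-invariant, it descends to a genuine anonymous duple rule. If $\tilde g_k$ is the unilateral $\vec\sigma\mapsto G_k(\sigma_{i_k})$, then, because each voter $j$ equals $\rho(i_k)$ for exactly $(n-1)!$ of the permutations $\rho$, we get $\bar g_k(\vec\sigma)=\frac1n\sum_{j=1}^n G_k(\sigma_j)=\sum_{\sigma\in\L(A)}\pi_\sigma\, G_k(\sigma)$, which again descends to an anonymous rule --- of exactly the stated unilateral form, except that $G_k$ is $\Delta(A)$-valued rather than $A$-valued.

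Finally I would eliminate that last gap. Since $\L(A)$ is finite, any $G:\L(A)\to\Delta(A)$ is a mixture of deterministic maps: let $\mu$ be the product distribution $\bigotimes_{\sigma\in\L(A)}G(\sigma)$ over the finite set of functions $h:\L(A)\to A$; then $\E_{h\sim\mu}[\vec e_{h(\sigma)}]=G(\sigma)$ for every $\sigma$, whence $\sum_\sigma\pi_\sigma G(\sigma)=\E_{h\sim\mu}\bigl[\sum_\sigma\pi_\sigma\vec e_{h(\sigma)}\bigr]$. Thus each $\bar g_k$ arising from a unilateral is itself a mixture of anonymous unilaterals in the sense of the proposition. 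Plugging this back in and collecting all the coefficients writes $f$ as a distribution over anonymous unilaterals and duple rules.

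The step I expect to be the real obstacle is the interface between Gibbard's theorem, which is stated for a fixed electorate and for ordinary profiles, and the domain $\Delta(\L(A))$ on which $f$ lives: one has to argue that the decompositions obtained for each $n$ (equivalently, for each ``cloned'' electorate realizing a given anonymous profile) fit together into a single description of $f$ on all weighted anonymous profiles. The remaining ingredients --- transferring strategyproofness to $\tilde f$, the symmetrization identity, and the product-measure splitting of $G_k$ --- are routine.
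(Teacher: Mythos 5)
The paper never actually proves this proposition: it is offered as an ``interpretation'' of the characterization of \citet{Gib77}, with no argument supplied. Your proposal therefore fills in a proof where the paper gives only a citation, and your two main moves --- transferring strategyproofness (via the weight vector $\vec w = \vec 1$) to the induced fixed-electorate rule $\tilde f$, then symmetrizing Gibbard's unilateral/duple decomposition over voter permutations --- are exactly the right way to make that citation precise. The product-measure device for writing a $\Delta(A)$-valued unilateral kernel $G$ as a mixture of deterministic maps $h:\L(A)\to A$ is correct and is genuinely needed, since the proposition's notion of anonymous unilateral insists on an $A$-valued $h$; that step is easy to overlook and you handle it cleanly.

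The substantive issue is the one you flag yourself, and it is a real gap rather than a routine check. Gibbard's theorem gives, for each fixed $n$, a decomposition of $\tilde f$ that is meaningful only on anonymous profiles of the form $\vec\pi_{\vec\sigma,\vec 1}$, i.e., on the grid of empirical distributions with denominator $n$; the mixture weights $\lambda_k$ and the component rules may depend on $n$, and nothing yet defines the components off that grid. Cloning voters extends the argument to any rational weight vector by passing to a common denominator, but you still need a single decomposition consistent across all denominators --- and since the duple components range over an infinite-dimensional family, a naive ``finitely many shapes plus compactness'' argument does not close this --- while irrational weights are not reached at all without a continuity hypothesis on $f$ that the paper does not impose. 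So as written your proof establishes the proposition on the subdomain of rational anonymous profiles and leaves it open on all of $\Delta(\L(A))$. In fairness, the paper itself never bridges this either (the proposition is used only as motivation for studying distributions over unilaterals and duples, not as a load-bearing lemma), but if you want a complete statement you should either restrict the domain to rational profiles or add an explicit continuity assumption and a limiting argument.
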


Examples of strategyproof randomized voting rules include \emph{randomized positional scoring rules} and the \emph{randomized Copeland} rule, which were previously studied in this context~\cite{CS06,Pro10}. In particular, a randomized positional scoring rule with score vector $\vec s$ is a distribution with probabilities proportional to $s_1, \dots, s_m$ over unilateral rules $g_1, \dots, g_m$, where each $g_i$ corresponds to the function $h_i(\sigma)$ that returns the alternative ranked at position $i$ of $\sigma$.
Similarly,  the randomized Copeland rule is a uniform distribution over duples $g_{a,b}$ for any two different $a, b\in A$, where $g_{a,b}(\vec \pi) = \vec e_a$ if $a\wins_{\vec \pi} b$, $g_{a,b}(\vec \pi) = \vec e_b$ if $b\wins_{\vec \pi} a$, and $(g_{a,b}(\vec \pi))_a = (g_{a,b}(\vec \pi))_b = 1/2$ if $a=_{\vec \pi} b$.

\subsection{Online Learning} \label{sec:prelim_ml}

We next describe the general setting of online learning, also known as  learning from experts.
We consider a game between a \emph{learner} and an \emph{adversary}. There is a set of actions (a.k.a experts) $\X$ available to the learner, a set of actions  $\Y$ available to the adversary, and a loss function $f:\X\times \Y \rightarrow [0, 1]$ that is known to both parties.
In every time step $t\in [T]$, the learner chooses a distribution, denoted by a vector $\vec p^t\in\Delta(\X)$, over the actions in $\X$, and the adversary chooses an action $y^t$ from the set $\Y$. The learner then receives a loss of $f(x^t, y^t)$ for $x^t \sim \vec p^t$.
At this point, the learner receives some feedback regarding the action of the adversary. 
In the \emph{full information} setting, the learner observes $y^t$ before proceeding to the next time step. In the \emph{partial information} setting, the learner only observes the loss $f(x^t, y^t)$.

The \emph{regret} of the algorithm is defined as the difference between its total expected loss and that of the best  fixed action in hindsight. The goal of the learner is to minimize its expected regret, that is, minimize 
\[
\E[Reg_T]\triangleq    \E\left[ \sum_{t=1}^T f(x^t,y^t)  -  \min_{x\in \X}  \sum_{t=1}^T f(x, y^t)   \right],
\]
where the expectation is taken over the choice of $x^t\sim \vec p^t$, and any other random choices made by the algorithm and the adversary. 
An online algorithm is called a \emph{no-regret} algorithm if $\E[Reg_T] \in o(T)$. In words, the average regret of the learner must go to $0$ as $T\rightarrow \infty$.
In general, deterministic algorithms, for which $\| \vec p^t\|_\infty = 1$, can suffer linear regret, because the adversary can choose a sequence of actions $y^1, \dots, y^T$ on which the algorithm makes sub-optimal decisions at every round. Therefore, randomization is one of the  key aspects of no-regret algorithms.

Many online no-regret algorithms are known for the full information and the partial information settings. In particular, the \Hedge\ algorithm~\cite{freund1995desicion} is one of the earliest results in this space for the full information setting. At time $t+1$, \Hedge\ picks each action $x$ with probability $p^{t+1}_x \propto \exp(-\eta F^t(x))$, for 
$F^t(x) = \sum_{s=1}^t f(x, y^s)$ and $\eta = \Theta\left(\sqrt{ 2 \ln(|\X|) \,/\, T} \right)$.
\begin{proposition}[\citet{freund1995desicion}] \label{prop:hedge}
\Hedge\ has regret $\E[Reg_T]\leq O\left( \sqrt{T \ln(|\X|) } \right).$
\end{proposition}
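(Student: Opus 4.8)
The plan is to run the standard potential-function (``weighted majority'') argument. Write $\ell^t_x \triangleq f(x,y^t)\in[0,1]$ for the loss vector at round $t$, let $w^t_x \triangleq \exp(-\eta F^t(x))$ be the unnormalized weight that \Hedge\ maintains on action $x$, and let $\Phi^t \triangleq \sum_{x\in\X} w^t_x$ be the potential, so that $p^{t+1}_x = w^t_x/\Phi^t$, $\Phi^0 = |\X|$, and $w^t_x = w^{t-1}_x\exp(-\eta\ell^t_x)$. The entire proof consists of sandwiching $\ln\Phi^T$: an upper bound obtained by tracking how much the potential shrinks each round, and a lower bound coming from the single term corresponding to the best action in hindsight.

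For the upper bound, the key per-round estimate is
\[ \frac{\Phi^t}{\Phi^{t-1}} = \sum_{x\in\X} p^t_x \exp(-\eta\ell^t_x) \le \sum_{x\in\X} p^t_x\Bigl(1 - \eta\ell^t_x + \tfrac{\eta^2}{2}(\ell^t_x)^2\Bigr) \le 1 - \eta\,\langle \vec p^t,\vec \ell^t\rangle + \tfrac{\eta^2}{2}, \]
using $e^{-z}\le 1-z+z^2/2$ for $z\ge 0$ together with $\ell^t_x\in[0,1]$. Applying $1+z\le e^z$ and multiplying over $t=1,\dots,T$ gives
\[ \Phi^T \le |\X|\cdot\exp\!\Bigl(-\eta\sum_{t=1}^T\langle \vec p^t,\vec \ell^t\rangle + \tfrac{\eta^2 T}{2}\Bigr), \]
and here $\sum_{t}\langle \vec p^t,\vec \ell^t\rangle = \sum_t \E_{x^t\sim \vec p^t}[f(x^t,y^t)]$ is precisely the expected loss of \Hedge\ (conditioned on the realized adversary sequence). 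For the lower bound, fix any $x^\star\in\arg\min_{x\in\X}F^T(x)$; since $\Phi^T\ge w^T_{x^\star}=\exp(-\eta F^T(x^\star))$, combining with the previous display and taking logarithms yields
\[ \sum_{t=1}^T\langle \vec p^t,\vec \ell^t\rangle - \min_{x\in\X}\sum_{t=1}^T f(x,y^t) \le \frac{\ln|\X|}{\eta} + \frac{\eta T}{2}. \]
Plugging in $\eta=\Theta(\sqrt{2\ln(|\X|)/T})$ balances the two terms and produces the $O(\sqrt{T\ln(|\X|)})$ bound; taking expectation over the adversary's (possibly random, adaptive) choices preserves it, since the inequality holds pointwise on every realization of $y^1,\dots,y^T$.

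There is no real obstacle here — this is a textbook argument — so the only points requiring care are (i) choosing a per-round bound on $e^{-\eta\ell^t_x}$ tight enough to produce the $\sqrt{T}$ rate (the quadratic Taylor bound above is the cleanest such choice; the cruder linear bound $\beta^{z}\le 1-(1-\beta)z$ on $[0,1]$ also works), and (ii) ensuring the in-hindsight benchmark $\min_x\sum_t f(x,y^t)$ enters correctly, which is immediate because the lower bound on $\Phi^T$ may simply be evaluated at the loss minimizer $x^\star$.
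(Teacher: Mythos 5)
Your argument is correct: it is the standard potential-function analysis of \Hedge, and the constants work out ($\eta=\sqrt{2\ln|\X|/T}$ gives regret at most $\sqrt{2T\ln|\X|}$). Note that the paper does not prove this proposition at all---it imports it from \citet{freund1995desicion}---but your derivation matches the cited source, and it is essentially the same sandwich-the-potential argument the paper itself carries out for Theorem~\ref{thm:bandit-rand} (there with the log-sum-exp potential $\Phi^t=-\tfrac{1}{\eta}\ln\sum_i\exp(-\eta\tilde L_i^{t-1})$ and estimated losses in place of true ones). Your closing remark about the pointwise validity of the bound over adaptive adversary realizations is the right way to justify passing to $\E[Reg_T]$.
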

For the partial information setting, the \EXP\ algorithm of \citet{ACFS02} can be thought of as a variant of the \Hedge\ algorithm with importance weighting. In particular, at time $t+1$, \EXP\ picks each action $x$ with probability $p^{t+1}_x \propto \exp\left(  -  \eta \tilde F^t(x) \right)$, for $\eta = \Theta\left(\sqrt{ 2 \ln(|\X|) \, /\, T |\X|} \right)$ and 
\begin{equation} \label{eq:tildeF}
\tilde F^t(x) = \sum_{s=1}^t  \frac{\mathds{1}_{(x^s = x)} f(x, y^s)}{p^s_x}.
\end{equation}
In other words, \EXP\ is is similar to \Hedge, except that instead of taking into account the total loss of an action, $F^t(x)$, it takes into account an  \emph{estimate} of the loss, $\tilde F^t(x)$.

\begin{proposition}[\citet{ACFS02}]\label{prop:exp3}
\EXP\ has regret $\E[Reg_T]\leq O\left( \sqrt{T |\X| \ln(|\X|) } \right).$
\end{proposition}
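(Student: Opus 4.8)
The plan is to recognize \EXP\ as the \Hedge\ algorithm run on the importance-weighted loss estimates $\hat\ell^t_x := \mathds{1}_{(x^t = x)}\, f(x,y^t)/p^t_x$, so that $\tilde F^t(x) = \sum_{s=1}^t \hat\ell^s_x$, and then to rerun the \Hedge\ potential-function analysis while paying attention to the fact that these estimates are unbounded. First I would record the two properties of the estimates that drive everything: conditioned on the first $t-1$ rounds, $\E[\hat\ell^t_x] = f(x,y^t)$ (unbiasedness, since $x^t$ equals $x$ with probability exactly $p^t_x$), and the realized weighted loss collapses to the true realized loss, $\sum_{x\in\X} p^t_x \hat\ell^t_x = f(x^t,y^t)$.

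Next I would set up the potential $W^t := \sum_{x\in\X} \exp(-\eta\,\tilde F^t(x))$, with $W^0 = |\X|$ and $p^{t+1}_x = \exp(-\eta\,\tilde F^t(x))/W^t$. Writing $W^t/W^{t-1} = \sum_{x} p^t_x e^{-\eta\hat\ell^t_x}$, applying $e^{-z}\le 1-z+z^2/2$ for $z\ge 0$ (legitimate because $\hat\ell^t_x\ge 0$), and then $1+u\le e^u$, gives the per-round bound
\[
\ln\frac{W^t}{W^{t-1}} \;\le\; -\eta\sum_{x} p^t_x\hat\ell^t_x \;+\; \frac{\eta^2}{2}\sum_{x} p^t_x(\hat\ell^t_x)^2 .
\]
Summing over $t$, telescoping, and lower-bounding $W^T \ge \exp(-\eta\,\tilde F^T(x^*))$ for any fixed action $x^*$ yields, after rearranging and using $\sum_{x} p^t_x\hat\ell^t_x = f(x^t,y^t)$,
\[
\sum_{t=1}^T f(x^t,y^t) - \tilde F^T(x^*) \;\le\; \frac{\ln|\X|}{\eta} + \frac{\eta}{2}\sum_{t=1}^T\sum_{x\in\X} p^t_x(\hat\ell^t_x)^2 .
\]

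Finally I would take expectations. Instantiating $x^*$ as the hindsight-optimal action and invoking unbiasedness gives $\E[\tilde F^T(x^*)] = \min_{x}\sum_t f(x,y^t)$ (for an oblivious adversary; an adaptive one is handled by the usual round-by-round conditioning, using that the displayed inequality holds for every fixed $x^*$). For the second-order term, $\sum_{x} p^t_x(\hat\ell^t_x)^2 = f(x^t,y^t)^2/p^t_{x^t} \le 1/p^t_{x^t}$, whose conditional expectation given the first $t-1$ rounds is exactly $\sum_{x} p^t_x\,(1/p^t_x) = |\X|$; note $p^t_x>0$ always, since it is a normalized exponential, so no explicit mixing with the uniform distribution is needed for the in-expectation bound. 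Combining, $\E[Reg_T] \le \ln|\X|/\eta + \eta T|\X|/2$, and the stated choice $\eta = \Theta\!\left(\sqrt{2\ln|\X|/(T|\X|)}\right)$ balances the two terms to give $\E[Reg_T] = O\!\left(\sqrt{T|\X|\ln|\X|}\right)$.

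The main obstacle is precisely the unboundedness of $\hat\ell^t_x$ (it can be as large as $1/p^t_x$), which blocks a black-box appeal to Proposition~\ref{prop:hedge}: one must expand the exponential to second order and then show that the resulting variance term, though individually large, has conditional expectation only $|\X|$ --- and this is exactly the source of the extra $\sqrt{|\X|}$ factor relative to the full-information guarantee. A secondary subtlety is comparing against the (random) hindsight-best action, resolved by observing that the pre-expectation inequality holds simultaneously for every fixed $x^*$.
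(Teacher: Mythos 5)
Your argument is correct, and the paper itself cites this proposition from \citet{ACFS02} without proof; however, the paper reproduces essentially the same analysis when proving Theorem~\ref{thm:bandit-rand}, whose potential $\Phi^t = -\frac{1}{\eta}\ln\bigl(\sum_i \exp(-\eta\tilde \U_i^{t-1})\bigr)$ is just $-\frac{1}{\eta}$ times the log of your $W^{t-1}$, and whose Lemmas~\ref{lem:estimate} and~\ref{lem:l2} are exactly your unbiasedness and conditional-second-moment facts. So your route --- second-order expansion of the exponential, telescoping, and bounding the variance term by $|\X|$ in conditional expectation --- is the same one the paper uses for its own EXP3-style result, and your handling of the unbounded estimates and of the fixed comparator $x^*$ is sound.
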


\section{Problem Formulation}
\label{sec:problem}

In this section, we formulate the question of how one can  \emph{design a weighting scheme that effectively weights the rankings of voters based on the history of their votes and the performance of the selected alternatives}.

We consider a setting where $n$ voters participate in a sequence of elections that are  decided by a known voting rule $f$. In each election, voters submit their rankings over a different set of $m$ alternatives so as to elect a winner. Given an adversarial  sequence of  voters' rankings $\vec \sigma^{1:T}$ and alternative losses $\vec \u^{1:T}$ over a span of $T$ elections, the best voter is the one whose rankings lead to the election of the winners with smallest loss overall. We call this voter \emph{the best voter in hindsight}.
When such a voter is known a priori, the weighting scheme would do well to follow the rankings of this voter throughout the sequence of elections. 
In this case, the overall expected loss of the alternatives chosen under this weighting scheme is
\begin{equation} \label{eq:best-voter}
 \min_{i\in [n]}  \sum_{t=1}^T \U_f(\vec \pi_{\vec \sigma^t, \vec e_i}, \vec \u^t).
\end{equation}

However, when the sequence of elections is not known a priori, the best voter is not known either. In this case, the weighting scheme  has to take an online approach to \emph{weighting the voters' rankings}. 
That is, at each time step $t\leq T$, the weighting scheme chooses a weight vector $\vec w^t$, possibly at random, to weight the rankings of the voters. 
After the election is held, the weighting scheme receives some feedback regarding the quality of the alternatives in that election, typically in the form of the loss  of the elected alternative or that of all alternatives.
Using the feedback, the weighting scheme then re-weights the voters' rankings based on their performance so far.  In this case, the total expected loss  of the weighting scheme is
\[   \sum_{t=1}^T \U_f(\vec \pi_{\vec \sigma^t, \vec w^t}, \vec \u^t).
\]

The type of the feedback  is an important factor in designing a weighting scheme. 
Analogously to the online learning models described in Section~\ref{sec:prelim_ml}, we consider two types of feedback, \emph{full information} and \emph{partial information}. In the full information case, after a winner is selected at time $t$, the quality of all alternatives and rankings of the voters at that round are revealed to the weighting scheme. Note that this information is sufficient for computing the loss of each voter's rankings so far.
On the other hand, in the partial information setting only the loss of the winner is revealed.
More formally, in the full information setting the choice of $\vec w^{t+1}$ can depend on $\vec \sigma^{1:t}$ and $\vec \u^{1:t}$, while in the partial information setting it can only depend on $\vec \sigma^{1:t}$ and $\u^s_{a^s}$ for $s \leq t$, where $a^s$ is the alternative that won the election at time $s$. 

Our goal is to design a weighting scheme that weights the rankings of the voters at each time step, and elects winners with overall expected loss that is almost as small as that of the best voter. We refer to the expected difference between these losses as the expected \emph{regret}. That is,
\[
\E[Reg_T]\triangleq   \E\left[\sum_{t=1}^T \U_f(\vec \pi_{\vec \sigma^t, \vec w^t}, \vec \u^t) - \min_i  \sum_{t=1}^T \U_f(\vec \pi_{\vec \sigma^t, \vec e_i}, \vec \u^t)\right],
\]
where the expectation is taken over any additional source of randomness in the adversarial sequence or the algorithm.
In particular, we seek a weighting scheme for which the average expected regret goes to zero as the time horizon $T$ goes to infinity, at a rate that is polynomial in the number of voters and alternatives.  That is, we wish to achieve $\E[Reg_T] = \poly(n,m)\cdot o(T)$. This is our version of a \emph{no-regret} algorithm.

No doubt the reader has noted that the above problem formulation is closely related to the general setting of online learning. Using the language of online learning introduced in Section~\ref{sec:prelim_ml}, the weight vector $\vec w^t$ corresponds to the learner's action $ x^t$, the vote profile and alternative losses  $(\vec \sigma^t, \vec \u^t)$  correspond to the adversary's action $y^t$, the expected loss of the weighting scheme $\U_f(\vec \pi_{\vec \sigma^t, \vec w^t}, \vec \u^t)$ corresponds to the loss of the learning algorithm $f(x^t, y^t)$, and the best-in-hindsight voter --- or weight vector $\vec e_i$ --- refers to the best-in-hindsight action.

\section{Randomized Weights}	\label{sec:non-det-wts}
In this section, we develop no-regret algorithms for the full information and partial information settings. We essentially require no assumptions on the voting rule, but also impose no restrictions on the weighting scheme. In particular, the weighting scheme may be randomized, that is, the weights can be sampled from a distribution over weight vectors. This allows us to obtain general positive results. 

As we just discussed, our setting is closely related to the classic online learning setting. 
Here, we introduce an algorithm analogous to \Hedge\ that works in the full information setting of Section~\ref{sec:problem} and achieves a total regret of $O(\sqrt{T \ln(n)})$.

\begin{algorithm}
	\SetAlgoNoLine
	\KwIn{Adversarial sequences  $\vec \sigma^{1:T}$ and $\vec \u^{1:T}$, and parameter $\eta = \sqrt{2 \ln n / T}$}
    \For{$t= 1, \dots, T$}{
   Play weight vector $\vec e_i$ with probability
    \[   p^t_i \propto \exp\left(-\eta \sum_{s = 1}^{t-1} \U_f(\vec \pi_{\vec\sigma^s, \vec e_i}, \vec \u^s) \right).
    \]
   Observe $\vec \u^t$ and $\vec \sigma^t$.
    }
\caption{Full information setting, using randomized weights.}
\label{alg:full-rand}
\end{algorithm}

\begin{theorem} \label{thm:full-rand}
For any anonymous voting rule $f$ and $n$ voters, Algorithm~\ref{alg:full-rand} has regret $O(\sqrt{T \ln(n)})$ in the full information setting.
\end{theorem}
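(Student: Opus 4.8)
The plan is to exhibit Algorithm~\ref{alg:full-rand} as a literal instance of \Hedge\ and then invoke Proposition~\ref{prop:hedge}. First I would set up the online learning instance: take the learner's action set to be $\X = \{\vec e_1, \dots, \vec e_n\}$, the $n$ unit weight vectors (``put all the weight on voter $i$''), let an adversary action be a pair $y = (\vec \sigma, \vec \u)$ with $\vec \sigma \in \L(A)^n$ and $\vec \u \in [0,1]^m$, and define the induced online loss by $f_{\mathrm{OL}}(\vec e_i, y) \triangleq \U_f(\vec \pi_{\vec \sigma, \vec e_i}, \vec \u)$. The one thing that genuinely needs checking is that $f_{\mathrm{OL}}$ takes values in $[0,1]$: since $\vec \u \in [0,1]^m$ and $f(\vec \pi) \in \Delta(A)$, the value $\U_f(\vec \pi, \vec \u) = f(\vec \pi)\cdot \vec \u$ is a convex combination of entries of $\vec \u$, hence lies in $[0,1]$. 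This is the only place the voting rule enters the argument, and all that is used is that $f$ maps anonymous profiles to distributions over $A$ — which is exactly why the statement holds for \emph{every} anonymous voting rule, with no further assumptions.

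Next I would observe that Algorithm~\ref{alg:full-rand} is precisely \Hedge\ run on this instance. With $F^{t-1}(\vec e_i) = \sum_{s=1}^{t-1} f_{\mathrm{OL}}(\vec e_i, y^s) = \sum_{s=1}^{t-1} \U_f(\vec \pi_{\vec \sigma^s, \vec e_i}, \vec \u^s)$ and $\eta = \sqrt{2\ln n / T} = \Theta\!\left(\sqrt{2\ln(|\X|)/T}\right)$, the sampling rule $p^t_i \propto \exp(-\eta F^{t-1}(\vec e_i))$ of Algorithm~\ref{alg:full-rand} coincides with the \Hedge\ update, and the full-information feedback (observing $\vec \sigma^t$ and $\vec \u^t$) is exactly what is needed to form $F^t$. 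It then remains to match the regret notions. Since Algorithm~\ref{alg:full-rand} plays the random weight vector $\vec w^t = \vec e_i$ with probability $p^t_i$, its expected per-round loss is $\sum_i p^t_i\, \U_f(\vec \pi_{\vec \sigma^t, \vec e_i}, \vec \u^t) = \E_{x^t \sim \vec p^t}[f_{\mathrm{OL}}(x^t, y^t)]$, while the benchmark $\min_i \sum_t \U_f(\vec \pi_{\vec \sigma^t, \vec e_i}, \vec \u^t)$ is exactly $\min_{x \in \X} \sum_t f_{\mathrm{OL}}(x, y^t)$. Hence the quantity $\E[Reg_T]$ defined in Section~\ref{sec:problem} is identical to the online-learning regret of \Hedge\ on this instance.

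Applying Proposition~\ref{prop:hedge} with $|\X| = n$ then gives $\E[Reg_T] \le O\!\left(\sqrt{T\ln(|\X|)}\right) = O\!\left(\sqrt{T\ln n}\right)$, which is the claim. If one wishes to allow an adaptive adversary whose choice of $(\vec \sigma^t, \vec \u^t)$ may depend on the realized weight vectors $\vec w^1, \dots, \vec w^{t-1}$, it suffices to recall that in the full-information setting the \Hedge\ regret inequality holds deterministically for every realized loss sequence, so taking expectations over the whole interaction still yields the bound. The only ``obstacle'' here is bookkeeping — verifying that the induced loss is $[0,1]$-valued and that Algorithm~\ref{alg:full-rand} reproduces the \Hedge\ update and the regret definition verbatim; there is no combinatorial or social-choice content beyond that, which is precisely the point of this section.
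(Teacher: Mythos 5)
Your proposal is correct and matches the paper's own argument: both reduce the problem to \Hedge\ over the $n$ actions $\{\vec e_1,\dots,\vec e_n\}$ with induced loss $\U_f(\vec \pi_{\vec \sigma, \vec e_i}, \vec \u) \in [0,1]$, observe that Algorithm~\ref{alg:full-rand} is literally \Hedge\ on this instance, and invoke Proposition~\ref{prop:hedge}. Your added checks (that the loss is $[0,1]$-valued and that the benchmark coincides with the best fixed action) are just the bookkeeping the paper's proof sketch leaves implicit.
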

\begin{proof}[Proof Sketch.]
At a high level, this algorithm 
only considers weight vectors that correspond to a single voter. At every time step, the algorithm chooses a distribution over such weight vectors and applies the voting rule to one such weight vector that is drawn at random from this distribution. This is equivalent to applying the  \Hedge\ algorithm to a set of actions, each of which is a weight vector that corresponds to a single voter. That is,
\[
\E \left[   \sum_{t=1}^T \U_f(\vec \pi_{\vec \sigma^t, \vec w^t}, \vec \u^t) \right] = \E_{i^t \sim \vec p^t}\left[ \sum_{t=1}^T \U_f(\vec \pi_{\vec \sigma^t, \vec e_{i^t}}, \vec \u^t) \right].
\]
The theorem follows by noting that the loss of the benchmark weighting scheme
(See Equation~\ref{eq:best-voter}) is the smallest loss that one can get from following one such weight vector. 
That is,  by Proposition~\ref{prop:hedge}, the total expected regret is 
\[
\E \left[   \sum_{t=1}^T \U_f(\vec \pi_{\vec \sigma^t, \vec w^t}, \vec \u^t) \right]
- \min_i  \sum_{t=1}^T \U_f(\vec \pi_{\vec \sigma^t, \vec e_i}, \vec \u^t) 
\leq O\left( \sqrt{T \ln(n) }\right) .
\]
\end{proof}

Next, we introduce an algorithm for the partial information setting. One may wonder whether the above approach, i.e., reducing our problem to online learning and using a standard algorithm, directly extends to the partial information setting (with the EXP3 algorithm). The answer is that it does not. In particular, in the classic setting of online learning with partial information feedback, the algorithm observes the action of the adversary and therefore can compute the estimated loss of the action it just played. That is, the algorithm can compute $f(x^t,y^t)$. In our problem setting, however, the weighting scheme only observes $\vec \sigma^t$ and $\ell^t_{a^t}$ for the specific alternative $a^t$ that was elected at this time. Since the losses of other alternatives remain unknown, the weighting scheme cannot even compute the expected loss of the specific voter $i^t$ it selected at time $t$, i.e.,  $\U_f(\vec \pi_{\vec \sigma^t, \vec e_{i^t}}, \vec \u^t)$. Therefore, we cannot directly use the \EXP\ algorithm by imagining that the voters are actions, as we do not obtain the partial information feedback that the algorithm requires.

Nevertheless, the algorithm we introduce here is inspired by \EXP. Fortunately, certain properties that the performance of  \EXP\ relies on still hold in our setting.
In particular, \EXP\ uses $f(x^t,y^t)$ to create an unbiased estimator of the true loss of action $x^t$ over $t$ time steps. As we show, Algorithm~\ref{alg:bandit-rand} also creates an unbiased estimator of the loss of voters in $t$ time steps, using $\u_{a^t}$.

\begin{algorithm}[h!]
	\SetAlgoNoLine
	\KwIn{An adversarial sequences of $\vec \sigma^{1:T}$ and $\vec \u^{1:T}$, and parameter $\eta = \sqrt{2 \ln n / Tn}$.}
Let $\tilde{\vec \U}^0=\mathbf{0}$. \\
\For{$t=1, \dots, T$}{
	\For{$i= 1, \dots, n$}{Let
		\[
		p_i^{t} \propto  \exp( - \eta \tilde \U^{t-1}_i).
		\]      
	}
	Play weight vector $\vec e_{i^t}$ from distribution $\vec p^t$.\\
	Observe the vote profile $\vec \sigma^t$, the alternative $a^t \sim f( \vec \pi_{\vec \sigma^t, \vec e_{i^t}})$, and its loss $\u^t_{a^t}$.\\
	Let $\tilde{\vec \u}^t$ be the vector such that 
	\[   \tilde \u^t_{i^t}  = \frac{  \u^t_{a^t}  }{p^t_{i^t}} \quad \text{ and } \quad \tilde \u^t_i =0  \quad \text{for } i\neq i^t.
	\]
	Let $\tilde{\vec \U}^t = \tilde{ \vec \U}^{t-1} + \tilde{\vec \u}^t$.\\
}
\caption{Partial information setting, using randomized weights.}
\label{alg:bandit-rand}
\end{algorithm}

\begin{theorem}\label{thm:bandit-rand}
For any anonymous voting rule $f$ and $n$ voters, Algorithm~\ref{alg:bandit-rand} has regret $O(\sqrt{T n \ln(n)})$ in the partial information setting.
\end{theorem}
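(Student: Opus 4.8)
The plan is to mimic the analysis of \EXP, treating the voters as actions, but using the estimator $\tilde{\vec \u}^t$ defined in Algorithm~\ref{alg:bandit-rand} in place of the estimator \eqref{eq:tildeF} that \EXP\ would use if it had access to the full loss vector over alternatives. The key observation that makes this substitution legitimate is that $\tilde{\vec \u}^t$ is an unbiased estimator of the per-round loss vector of the voters. Concretely, define $\u^t_i \triangleq \U_f(\vec \pi_{\vec \sigma^t, \vec e_i}, \vec \u^t)$, the loss the weighting scheme would incur at round $t$ if it deterministically followed voter $i$. First I would verify that, conditioned on the history up to round $t$ (hence on $\vec p^t$, $\vec \sigma^t$, $\vec \u^t$), we have $\E_{i^t \sim \vec p^t,\, a^t \sim f(\vec \pi_{\vec \sigma^t, \vec e_{i^t}})}[\tilde \u^t_i] = \u^t_i$ for every $i$: when $i^t = i$ (probability $p^t_i$) the inner expectation of $\u^t_{a^t}$ over $a^t \sim f(\vec \pi_{\vec \sigma^t, \vec e_i})$ is exactly $\U_f(\vec \pi_{\vec \sigma^t, \vec e_i}, \vec \u^t) = \u^t_i$, and dividing by $p^t_i$ cancels the selection probability; when $i^t \neq i$ the estimator is $0$. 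This is the analogue of the unbiasedness of \eqref{eq:tildeF}, and it is exactly the point flagged in the text preceding the algorithm.

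Next I would run the standard exponential-weights / \EXP\ potential argument on the $n$ ``actions'' (voters) with loss estimates $\tilde{\vec \u}^t$. Tracking the potential $\Phi^t = \ln\bigl(\sum_i \exp(-\eta \tilde \U^t_i)\bigr)$ and using the inequality $e^{-x} \le 1 - x + x^2$ for $x \ge 0$ (valid since $\tilde \u^t_i \ge 0$), one obtains the familiar bound
\[
\sum_{t=1}^T \sum_{i=1}^n p^t_i \tilde \u^t_i \;-\; \min_j \sum_{t=1}^T \tilde \u^t_j \;\le\; \frac{\ln n}{\eta} \;+\; \eta \sum_{t=1}^T \sum_{i=1}^n p^t_i (\tilde \u^t_i)^2 .
\]
Now take expectations. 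For the first term on the left, $\E[\sum_i p^t_i \tilde \u^t_i] = \E[\sum_i p^t_i \u^t_i] = \E[\U_f(\vec \pi_{\vec \sigma^t, \vec w^t}, \vec \u^t)]$ by unbiasedness and the fact that playing $\vec e_{i^t}$ with $i^t \sim \vec p^t$ is what the algorithm does; for the $\min_j$ term, $\E[\min_j \sum_t \tilde \u^t_j] \le \min_j \E[\sum_t \tilde \u^t_j] = \min_j \sum_t \u^t_j$, which is exactly the benchmark in \eqref{eq:best-voter}. It remains to bound the variance term: since only coordinate $i^t$ of $\tilde{\vec \u}^t$ is nonzero and $\u^t_{a^t} \in [0,1]$, we get $\sum_i p^t_i (\tilde \u^t_i)^2 = p^t_{i^t}(\u^t_{a^t}/p^t_{i^t})^2 = (\u^t_{a^t})^2 / p^t_{i^t}$, and taking expectation over $i^t \sim \vec p^t$ (and $a^t$) yields $\E\bigl[\sum_i p^t_i (\tilde \u^t_i)^2\bigr] \le \sum_i p^t_i \cdot \frac{1}{p^t_i} = n$. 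Summing over $t$ gives $\eta T n$ for the variance contribution. Plugging in $\eta = \sqrt{2\ln n / (Tn)}$ balances $\frac{\ln n}{\eta} + \eta T n = O(\sqrt{Tn\ln n})$, which is the claimed regret bound.

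The only genuine subtlety — and the step I would be most careful about — is the conditioning in the unbiasedness claim and in the variance computation: $\vec p^t$ is itself a random object depending on $\tilde \u^1, \dots, \tilde \u^{t-1}$, so all expectations must be handled via the tower rule, conditioning on the sigma-field generated by the first $t-1$ rounds before taking the round-$t$ expectation over $i^t$ and $a^t$. Everything else is the textbook \EXP\ argument (Proposition~\ref{prop:exp3}); indeed, once unbiasedness is established, one could alternatively cite \cite{ACFS02} almost verbatim, since our estimator plays exactly the role their $\tilde F^t$ plays, with the boundedness $\u^t_{a^t} \in [0,1]$ supplying the needed range bound. I do not expect the adversarial (rather than stochastic) nature of $\vec \sigma^{1:T}, \vec \u^{1:T}$ to cause trouble, as the exponential-weights analysis never uses independence across rounds.
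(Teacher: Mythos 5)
Your proposal is correct and follows essentially the same route as the paper: the same unbiasedness and second-moment claims (the paper's Lemmas~\ref{lem:estimate} and \ref{lem:l2}), the same exponential-weights potential argument with the estimator $\tilde{\vec\u}^t$ in place of true losses, and the same choice of $\eta$. The only cosmetic differences are the normalization of the potential and the constant in the inequality $e^{-x}\leq 1-x+x^2$ versus $e^{-x}\leq 1-x+\tfrac{x^2}{2}$, neither of which affects the $O(\sqrt{Tn\ln n})$ bound.
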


Let us first establish a few crucial properties of Algorithm~\ref{alg:bandit-rand} in preparation for proving  Theorem~\ref{thm:bandit-rand}. 
In the next lemma, we show that $\tilde{\vec \u}^t$ creates an unbiased estimator of the expected loss of the weighting scheme. Similarly, we show that for any voter $i^*$, $\tilde \U^t_{i^*}$ is an unbiased estimator for the loss that the weighting scheme would have received if it followed the rankings of voter $i^*$ throughout the sequence of elections.  
\begin{lemma} \label{lem:estimate}
For any $t$ and any $i^*$ we have
\[ \E\limits_{i^t, a^t} \left[  \sum_{i=1}^n p^t_i \tilde \u^t_i  \right]  = \E_{i^t} \left[  \U_f(\vec \pi_{\vec \sigma^t, \vec e_{i^t}}, \vec \u^t)\right] 
\quad and \quad
\E_{i^t, a^t} \left[ \tilde \U^T_{i^*}  \right]   = \sum_{t=1}^T \U_f(\vec \pi_{\vec \sigma^t, \vec e_{i^*}}, \vec \u^t),
\]
where $i^t\sim \vec p^t$ and $a^t \sim f(\vec \pi_{\vec \sigma^t, \vec e_{i^t}})$.
\end{lemma}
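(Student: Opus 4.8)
The plan is to verify the two claimed identities by taking conditional expectations one step at a time, exploiting the fact that $\tilde{\vec\u}^t$ is a single-coordinate estimator built from the realized draws $i^t$ and $a^t$.

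First I would handle the second identity, since the first follows from it by a nearly identical computation. Fix a voter $i^*$ and a round $t$, and condition on all randomness before round $t$ (so that $\vec p^t$, $\tilde\U^{t-1}_{i^*}$, $\vec\sigma^t$, $\vec\u^t$ are determined). I would compute $\E_{i^t,a^t}[\tilde\u^t_{i^*}]$ by first conditioning on $i^t$: given $i^t$, we have $\tilde\u^t_{i^*} = \u^t_{a^t}/p^t_{i^*}$ when $i^t = i^*$ and $\tilde\u^t_{i^*}=0$ otherwise. Taking expectation over $a^t \sim f(\vec\pi_{\vec\sigma^t,\vec e_{i^t}})$ in the case $i^t=i^*$ gives $\E_{a^t}[\u^t_{a^t}] = f(\vec\pi_{\vec\sigma^t,\vec e_{i^*}})\cdot\vec\u^t = \U_f(\vec\pi_{\vec\sigma^t,\vec e_{i^*}},\vec\u^t)$ by the definition of $\U_f$. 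Then averaging over $i^t\sim\vec p^t$, the event $i^t=i^*$ carries probability $p^t_{i^*}$, which exactly cancels the $1/p^t_{i^*}$ factor, yielding
\[
\E_{i^t,a^t}\!\left[\tilde\u^t_{i^*}\,\middle|\,\mathcal{F}_{t-1}\right] = \U_f(\vec\pi_{\vec\sigma^t,\vec e_{i^*}},\vec\u^t).
\]
Summing over $t=1,\dots,T$, using linearity of expectation and the tower property (the telescoping definition $\tilde\U^T_{i^*} = \sum_{t=1}^T \tilde\u^t_{i^*}$), gives the second claim. A small point to be careful about: when $p^t_{i^*}=0$ the estimator is defined to be $0$ and that coordinate contributes $0$ on both sides, so no division-by-zero issue arises; since $p^t_i\propto\exp(-\eta\tilde\U^{t-1}_i)>0$ this case does not actually occur, but it is worth a remark.

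For the first identity, I would again condition on $\mathcal{F}_{t-1}$ and compute $\E_{i^t,a^t}[\sum_i p^t_i\tilde\u^t_i]$. Only the $i=i^t$ term of the sum is nonzero, so the sum equals $p^t_{i^t}\tilde\u^t_{i^t} = p^t_{i^t}\cdot \u^t_{a^t}/p^t_{i^t} = \u^t_{a^t}$. Taking expectation over $a^t\sim f(\vec\pi_{\vec\sigma^t,\vec e_{i^t}})$ gives $\U_f(\vec\pi_{\vec\sigma^t,\vec e_{i^t}},\vec\u^t)$, and then the outer expectation over $i^t\sim\vec p^t$ gives exactly $\E_{i^t}[\U_f(\vec\pi_{\vec\sigma^t,\vec e_{i^t}},\vec\u^t)]$, as desired.

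Honestly, this lemma is routine rather than deep: the only ``obstacle'' is bookkeeping the conditioning correctly, i.e. making explicit that $\vec p^t$ is $\mathcal{F}_{t-1}$-measurable while $i^t$ and $a^t$ are the fresh randomness at round $t$, and being careful about the order in which the two expectations (over $i^t$ then over $a^t$) are taken. The key structural fact that makes everything work is the importance-weighting cancellation $\Pr[i^t=i^*]\cdot(1/p^t_{i^*}) = 1$, together with the identity $\E_{a\sim f(\vec\pi)}[\u_a] = \U_f(\vec\pi,\vec\u)$ from the definition of $\U_f$. No property of the voting rule $f$ beyond this definition is needed.
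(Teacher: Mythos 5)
Your proof is correct and follows essentially the same route as the paper's: in both arguments the key steps are that $\tilde{\vec\u}^t$ is supported on the single coordinate $i^t$, that the factor $1/p^t_{i^*}$ cancels against the probability $p^t_{i^*}$ of the event $i^t=i^*$, and that $\E_{a\sim f(\vec\pi)}[\u_a]=\U_f(\vec\pi,\vec\u)$ by definition. Your explicit conditioning on the pre-round-$t$ history and the remark about $p^t_i>0$ are slightly more careful than the paper's presentation but do not change the argument.
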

\begin{proof}
For ease of notation, we suppress $t$ when it is clear from the context.
First note that $\tilde{\vec \u}$ is zero in all of its elements, except for $\tilde{\u}_{i^t}$. So, 
\[  \sum_{i=1}^n p_i \tilde \u_i = p_{i^t} \tilde \u_{i^t}  =  p_{i^t} \frac{\u_{a^t}}{p_{i^t}}  = \u_{a^t}.
\]
Therefore, we have
\begin{align*}
\E_{i^t, a^t} \left[  \sum_{i=1}^n p_i \tilde \u_i  \right] = \E_{i^t, a^t}\left[  \u_{a^t}  \right] = \E_{i^t}\left[ \U_f(\vec \pi_{\vec \sigma, \vec e_{i^t}}, \vec \u)  \right].
\end{align*}
For clarity of presentation, let $\tilde{ \vec \u}^{i, a}$ be an alternative representation of $\tilde{\vec \u}$ when $i^t = i$ and $a^t = a$. 
Note that $\u^{i, a}_{i^*} \neq 0$ only if $i^* = i$.
We have 
\begin{align*}
\E_{i^t, a^t} \left[ \tilde \U^T_{i^*}  \right] &= \sum_{t=1}^T \E_{i^t, a^t}\left[  \tilde \u^{i^t, a^t}_{i^*}  \right]
    = \sum_{t=1}^T \sum_{i=1}^n  p^t_i  \E_{a \sim f(\vec \pi_{\vec \sigma^t, \vec e_i})}\left[  \tilde \u^{i, a}_{i^*}  \right]
    = \sum_{t=1}^T p^t_{i^*} \E_{a \sim f(\vec \pi_{\vec \sigma^t, \vec e_{i^*}})}\left[  \frac{\u^t_a}{p^t_{i^*}}  \right]\\
& = \sum_{t=1}^T \E_{a \sim f(\vec \pi_{\vec \sigma^t, \vec e_{i^*}})}\left[ \u^t_a  \right] 
=  \sum_{t=1}^T \U_f(\vec \pi_{\vec \sigma^t, \vec e_{i^*}}, \vec \u^t).
\end{align*}
\end{proof}

\begin{lemma} \label{lem:l2}
For any $t$,  we have
\[  \E_{i^t, a^t}  \left[  \sum_{i=1}^n p^t_i (\tilde \u^t_i)^2 \right] \leq n,
\]
where $i^t\sim \vec p^t$ and $a^t \sim f(\vec \pi_{\vec \sigma^t, \vec e_{i^t}})$.
\end{lemma}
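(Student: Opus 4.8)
The proof is a short direct computation mirroring the standard second-moment bound used in the analysis of \EXP. The key observation is that the estimator $\tilde{\vec\u}^t$ is supported on the single coordinate $i^t$, so the inner sum collapses to one term, and the importance weight $1/p^t_{i^t}$ cancels nicely when we take the expectation over $i^t$.

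\textbf{Step 1 (collapse the sum).} Fix $t$ and, as in the previous lemma, suppress the superscript $t$. Since $\tilde\u_i = 0$ for every $i \neq i^t$ and $\tilde\u_{i^t} = \u_{a^t}/p_{i^t}$, we get
\[
\sum_{i=1}^n p_i (\tilde\u_i)^2 = p_{i^t}(\tilde\u_{i^t})^2 = p_{i^t}\cdot\frac{(\u_{a^t})^2}{(p_{i^t})^2} = \frac{(\u_{a^t})^2}{p_{i^t}}.
\]

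\textbf{Step 2 (take expectations and cancel).} Taking the expectation over $i^t \sim \vec p^t$ and then over $a^t \sim f(\vec\pi_{\vec\sigma^t,\vec e_{i^t}})$, the factor $p_i$ from the choice of $i^t = i$ cancels the $1/p_i$ in the summand:
\[
\E_{i^t,a^t}\!\left[\frac{(\u_{a^t})^2}{p_{i^t}}\right]
= \sum_{i=1}^n p_i \cdot \frac{1}{p_i}\,\E_{a\sim f(\vec\pi_{\vec\sigma^t,\vec e_i})}\!\big[(\u_a)^2\big]
= \sum_{i=1}^n \E_{a\sim f(\vec\pi_{\vec\sigma^t,\vec e_i})}\!\big[(\u_a)^2\big].
\]

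\textbf{Step 3 (bound each term).} Because the loss vector satisfies $\vec\u^t \in [0,1]^m$, we have $(\u_a)^2 \le 1$ for every alternative $a$, so each of the $n$ expectations in the sum is at most $1$, giving the claimed bound of $n$.

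\textbf{Main obstacle.} There is essentially no obstacle here — the statement is the analogue of the routine variance bound for importance-weighted estimators, and the only thing to be careful about is the order of the two expectations (first fixing $i^t$, so that $p^t_{i^t}$ is a constant when averaging over $a^t$) and the fact that $p^t_i > 0$ for all $i$ (guaranteed since the weights in Algorithm~\ref{alg:bandit-rand} come from an exponential, hence are strictly positive), so the division by $p^t_{i^t}$ is well defined.
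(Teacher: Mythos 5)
Your proof is correct and follows essentially the same route as the paper's: collapse the sum to the single nonzero coordinate, cancel the importance weight against the sampling probability, and bound $(\u_a)^2\le 1$ to get the factor of $n$. The remark about the order of expectations and strict positivity of $p^t_i$ is a sensible (if implicit in the paper) bit of care.
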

\begin{proof}
For ease of notation, we suppress $t$ when it is clear from the context.
Since $\tilde{\vec \u}$ is zero in all of its elements, except for $\tilde{\u}_{i^t}$, we have
\[  \sum_{i=1}^n p_i (\tilde \u_i)^2 = p_{i^t} (\tilde \u_{i^t})^2  =  p_{i^t} \left( \frac{\u_{a^t}}{p_{i^t}} \right)^2  = \frac{(\u_{a^t})^2}{p_{i^t}}.
\]
Therefore,
\begin{align*}
\E_{i^t, a^t} \left[  \sum_{i=1}^n p_i (\tilde \u_i)^2  \right] = \E_{i^t, a^t}\left[  \frac{(\u_{a^t})^2}{p_{i^t}}  \right] = 
\sum_{i=1}^n p_i \E_{a \sim f(\vec \pi_{\vec \sigma, \vec e_i})} \left[    \frac{(\u_{a})^2}{p_{i}}      \right]  = 
\sum_{i=1}^n \E\limits_{a \sim f(\vec \pi_{\vec \sigma, \vec e_i})}  \left[    (\u_{a})^2      \right]  \leq n.
\end{align*}
\end{proof}

\begin{proof}[Proof of Theorem~\ref{thm:bandit-rand}]
We use a potential function, given by  $\Phi^t \triangleq -\frac 1\eta \ln\left(   \sum_{i=1}^n \exp(- \eta \tilde \U_i^{t-1})  \right).$ We prove the claim by analyzing the expected increase in this potential function at every time step. Note that 
\begin{equation}
\label{eq:phidiff}
\Phi_{t+1} - \Phi_t =  - \frac 1\eta  \ln\left(  \frac{\sum_{i=1}^n \exp( -\eta \tilde \U_i^{t-1} -  \eta \tilde \u^t_i)}{\sum_{i=1}^n \exp( - \eta \tilde \U_i^{t-1})}  \right) =  -\frac 1\eta \ln\left(  \sum_{i=1}^n p^t_i  \exp( - \eta \tilde \u^t_i)  \right).
\end{equation}
Taking the expected increase in the potential function over the random choices of $i^t$ and $a^t$ for all $t=1, \dots, T$, we have
\begin{align*}
\E\left[ \Phi_{T+1} - \Phi_1 \right]  &= 
\sum_{t=1}^T \E_{i^t, a^t} \left[ \Phi_{t+1} - \Phi_t \right] \\
&\geq \sum_{t=1}^T \E_{i^t, a^t} \left[ - \frac 1\eta \ln \left(    \sum_{i=1}^n p^t_i \left(1 - \eta \tilde \u^t_i + \frac 12 \left(\eta \tilde \u^t_i\right)^2 \right)    \right) \right] \\
&= \sum_{t=1}^T \E_{i^t, a^t} \left[ - \frac 1\eta \ln \left(   1 -  \eta \left(  \sum_{i=1}^n p^t_i \tilde \u^t_i -     \frac{\eta}{2} \sum_{i=1}^n p^t_i  \left(\tilde \u^t_i\right)^2 \right)    \right) \right] \\
&\geq \sum_{t=1}^T \E_{i^t, a^t} \left[  \sum_{i=1}^n p^t_i \tilde \u^t_i  -  \frac{\eta}{2} \sum_{i=1}^n p^t_i  \left(\tilde \u^t_i\right)^2 \right] \\
&\geq  \E\left[ \sum_{t=1}^T \U_f(\vec \pi_{\vec \sigma^t, \vec e_{i^t}}, \vec \u^t)\right]   - \frac{\eta T n }{2},\numberthis \label{eq:lower}
\end{align*}
where the second transition follows from Equation~\eqref{eq:phidiff} because for all $x\geq 0$, $e^{-x} \leq 1 - x + \frac{x^2}{2}$, the fourth transition follows from $\ln(1-x)\leq -x$ for all $x\in\mathbb{R}$, and the last transition holds by Lemmas~\ref{lem:estimate} and \ref{lem:l2}.
On the other hand, $\Phi_1 = -\frac 1\eta \ln n$ and for any $i^*$, 
\[
\Phi_{T+1}  \leq -\frac 1\eta \ln\left( \exp(-\eta \tilde{\U}^{T}_{i^*}) \right) = \tilde{\U}^{T}_{i^*}.
\]
Therefore,
\begin{equation}
\E\left[ \Phi_{T+1} - \Phi_1 \right]
\leq  \E\left[ \tilde \U^T_{i^*}  + \frac 1\eta \ln n \right] 
=  \E\left[
\sum_{t=1}^T \U_f(\vec \pi_{\vec \sigma^t, \vec e_{i^*}}, \vec \u^t) + \frac 1\eta \ln n \right]. \label{eq:upper}
\end{equation}
We can now prove the theorem by using Equations~\eqref{eq:lower} and \eqref{eq:upper}, and the parameter value $\eta = \sqrt{2 \ln n / Tn}$:
\[ \E\left[ \sum_{t=1}^T \U_f(\vec \pi_{\vec \sigma^t, \vec e_{i^t}}, \vec \u^t)
-   \min_{i\in[n]}\sum_{t=1}^T \U_f(\vec \pi_{\vec \sigma^t, \vec e_{i}}, \vec \u^t) \right] 
\leq \frac{1}{\eta} \ln n + \frac{\eta T n}{2} \leq \sqrt{2 T n \ln n}.
\]
\end{proof}

\section{Deterministic Weights} \label{sec:det-wts}

One of the key aspects of the weighting schemes we used in the previous section is randomization.
In such weighting schemes,  the weights  of the  voters  not only depend on their performance so far, but also on the algorithm's coin flips. 
In practice, voters would most likely prefer weighting schemes that depend only on their past performance, and are therefore easier to interpret. 

In this section, we focus on designing weighting schemes that are deterministic in nature. Formally, a \emph{deterministic weighting scheme} is an algorithm that at time step $t+1$ deterministically chooses one weight vector $\vec w^{t+1}$ based on the history of play, i.e.,  sequences $\vec \sigma^{1:t}$, $\vec \u^{1:t}$, and $a^{1:t}$.
In this section, we seek an answer to the following question: \emph{``For which voting rules is there a no-regret deterministic weighting scheme?"} 
In contrast to the results established in the previous section, we find that the properties of the voting rule play an important role here. In the remainder of this section, we show possibility and impossibility results for the existence of such weighting schemes under  randomized and deterministic voting rules.

\subsection{Deterministic Voting Rules} \label{sec:det-voting-wt}

We begin our search for deterministic weighting schemes by considering deterministic voting rules.
Note that in this case the winning alternatives are induced deterministically by the weighting scheme, so the weight vector $\vec w^{t+1}$ should be deterministically chosen based on the sequences $\vec \sigma^{1:t}$ and $\vec \u^{1:t}$. 
We establish an impossibility result: Essentially no deterministic weighting scheme is  no-regret for a deterministic voting rule. 
Specifically, we show that a deterministic no-regret weighting scheme exists for a deterministic voting rule if and only if the voting rule is constant on unanimous profiles.

\begin{definition}
A voting rule $f$ is \emph{constant on unanimous profiles} if and only if
$$\forall \sigma, \sigma' \in \L(A) , f(\vec e_\sigma) = f(\vec e_{\sigma'}),$$
where $\vec e_\sigma$ denotes the anonymous vote profile that has all of its weight on ranking $\sigma$.
\end{definition}

\begin{theorem} \label{thm:det-voting-wt}
For any deterministic voting rule $f$, a deterministic weighting scheme with regret $o(T)$ exists if and only if $f$ is constant on unanimous profiles. This is true in both the full information and partial information settings. 
\end{theorem}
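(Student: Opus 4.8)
The plan is to treat the two directions of the equivalence separately; the forward (``only if'') direction is the one with content.

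For the ``if'' direction, suppose $f$ is constant on unanimous profiles. Then for every round $t$ and every voter $i$ we have $\vec\pi_{\vec\sigma^t,\vec e_i}=\vec e_{\sigma^t_i}$, and by hypothesis $f(\vec e_{\sigma^t_i})$ equals one and the same distribution $\vec d^t\in\Delta(A)$ regardless of $i$. Hence $\U_f(\vec\pi_{\vec\sigma^t,\vec e_i},\vec\u^t)=\vec d^t\cdot\vec\u^t$ does not depend on $i$, so the benchmark $\min_i\sum_t\U_f(\vec\pi_{\vec\sigma^t,\vec e_i},\vec\u^t)$ equals $\sum_t\vec d^t\cdot\vec\u^t$, which is exactly the loss of the deterministic scheme that always plays $\vec w^t=\vec e_1$ (follow voter $1$). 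That scheme therefore has regret $0$, and since it uses no feedback the argument is the same in the full and partial information settings.

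For the ``only if'' direction I would establish an $\Omega(T)$ lower bound against every deterministic scheme. Assume $f$ is not constant on unanimous profiles, so there are rankings $\sigma\neq\sigma'$ with $f(\vec e_\sigma)\neq f(\vec e_{\sigma'})$; since $f$ is deterministic these are point masses $\vec e_a,\vec e_{a'}$ with $a\neq a'$ (non-constancy forces $m\geq2$, so this is meaningful). Fix any deterministic weighting scheme and, using $n\geq2$ voters, confront it with the following adaptive adversary: in \emph{every} round voter $1$ reports $\sigma$, voter $2$ reports $\sigma'$, and every other voter copies voter $1$; after the scheme commits to its (deterministically determined) weight vector $\vec w^t$, the adversary computes the induced unique winner $c^t$ (i.e.\ $f(\vec\pi_{\vec\sigma^t,\vec w^t})=\vec e_{c^t}$) and sets the loss vector $\vec\u^t=\vec e_{c^t}$, putting loss $1$ on $c^t$ and $0$ on every other alternative. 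Then the scheme loses exactly $1$ per round, for total loss $T$; voter $1$'s round-$t$ loss is $\U_f(\vec e_\sigma,\vec\u^t)=\u^t_a=\mathds{1}_{(c^t=a)}$, voter $2$'s is $\mathds{1}_{(c^t=a')}$, and every copy ties voter $1$. Since $a\neq a'$, the number of rounds with $c^t=a$ plus the number with $c^t=a'$ is at most $T$, so the better of voters $1$ and $2$ has total loss at most $T/2$; hence the benchmark is at most $T/2$ and the regret is at least $T/2=\Omega(T)$, which is not $o(T)$. The adversary reveals only the winner's loss each round, so the bound holds verbatim in both feedback settings.

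The single nontrivial point is the lower bound, and its crux is that a \emph{deterministic} rule collapses any weighted profile to one winning alternative $c^t$, which the adversary can then punish while keeping the two designated voters' losses supported on the fixed distinct pair $\{a,a'\}$ --- a pigeonhole over $\{a,a'\}$ caps the benchmark at $T/2$ no matter how the scheme distributes weight. Everything else is bookkeeping: we should assume $n\geq2$ (for $n=1$ any scheme trivially has regret $0$, so the statement genuinely requires it) and adopt a convention for the degenerate choice $\vec w^t=\vec{0}$ (e.g.\ forbidding it, which costs nothing since it leaves the rule with no input).
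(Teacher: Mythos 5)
Your proposal is correct and follows essentially the same route as the paper: the ``if'' direction via the follow-voter-$1$ scheme, and the ``only if'' direction via an adaptive adversary that fixes a profile built from two rankings $\sigma,\sigma'$ with $f(\vec e_\sigma)\neq f(\vec e_{\sigma'})$ and puts loss $1$ on the deterministic winner each round. The only (harmless) difference is bookkeeping: you pigeonhole over the two designated voters to cap the benchmark at $T/2$, whereas the paper averages over all $n$ voters and gets the slightly weaker but still linear bound $T/n$; your explicit remarks about $n\geq 2$ and the degenerate weight vector are reasonable side conditions the paper leaves implicit.
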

\begin{proof}
We first prove that for any voting rule that is constant on unanimous profiles there exists a deterministic weighting scheme that is no-regret.
Consider such a voting rule $f$
and a simple deterministic weighting scheme that
uses weight vector $\vec w^t = \vec e_1$ for every time step $t \leq T$ (so it does not use feedback --- whether full or partial --- at all).
Note that at each time step $t$  and for any voter $i \in [n]$, 
\[ f(\vec \pi_{\vec \sigma^t, \vec w^t}) = f(\vec e_{\sigma^t_1}) = f(\vec e_{\sigma^t_i}) = f(\vec \pi_{\vec \sigma^t, \vec e_i}),
\]
where the second transition holds because $f$ is constant on unanimous profiles. As a result, $L_f(\vec \pi_{\vec \sigma^t, \vec w^t}, \vec \ell^t) =  L_f(\vec \pi_{\vec \sigma^t, \vec e_i}, \vec \ell^t) $.
In words, the total loss of the weighting scheme is the same as the total loss of any individual voter --- this weighting scheme has $0$ regret.

Next, we prove that if $f$ is not constant on unanimous profiles then for any deterministic weighting scheme there is an adversarial sequence of $\vec \sigma^{1:T}$ and $\vec \u^{1:T}$ that leads to regret of $\Omega(T)$, even in the full information setting. 
Take any such voting rule $f$ and let $\tau, \tau' \in \L(A)$ be such that $f(\vec{e}_\tau) \neq f(\vec{e}_{\tau'})$. At time $t$, the adversary chooses $\vec \sigma^t$ and $\vec \u^t$ based on the deterministic weight vector $\vec w^t$ as follows: The adversary sets $\vec \sigma^t$ to be such that $\sigma^t_1 = \tau$  and $\sigma^t_j = \tau'$ for all $j\neq 1$. Let alternative $a^t$ be the winner of profile $\vec \pi_{\vec \sigma^t, \vec w^t}$, i.e., $f(\vec \pi_{\vec \sigma^t, \vec w^t}) = \vec e_{a^t}$. The adversary sets $\u^t_{a^t} = 1$ and $\u^t_x = 0$ for all $x\neq a^t$.
Therefore, the weighting scheme incurs a  loss of $1$ at every step, and its total loss is 
\[ 
\sum_{t=1}^T L_f(\vec \pi_{\vec \sigma^t, \vec w^t}, \vec \ell^t) = \sum_{t=1}^T   \u^t_{a^t}   = T.
\] 

Let us consider the total loss that the ranking of any individual voter incurs.
By design, for any $j >1$,
\[
f(\vec \pi_{\vec \sigma^t, \vec e_1}) = f(\vec{e}_\tau) \neq f(\vec{e}_{\tau'}) = f(\vec \pi_{\vec \sigma^t, \vec e_j}).
\]
Therefore, for at least one  voter $i\in [n]$, $f(\vec \pi_{\vec \sigma^t, \vec e_i})\neq \vec e_{a^t}$. Note that such a voter receives loss of $0$, so  the combined loss of all voters is at most $n-1$. Over all time steps, the total  combined loss of all voters is at most $T(n-1)$. As a result, the best voter incurs a loss of at most  $\frac{(n-1)T}{n}$, i.e.,  the average loss.

We conclude that the regret of the weighting scheme is 
\[Reg_T = \sum_{t=1}^T L_f(\vec \pi_{\vec \sigma^t, \vec w^t}, \vec \ell^t) - \min_{i\in [n]} \sum_{t=1}^T L_f(\vec \pi_{\vec \sigma^t, \vec e_i}, \vec \ell^t) \geq T - \frac{(n-1)T}{n} = \frac Tn.
\]
\end{proof}

\subsection{Randomized Voting Rules}

Theorem~\ref{thm:det-voting-wt} indicates that we need to allow randomness (either in the weighting scheme or in the voting rule) if we wish to have no-regret guarantees. As stated before, we would like to have a deterministic weighting scheme so that the weights of voters are not decided by coin flips. This leaves us with no choice other than having a randomized voting rule. Nonetheless, one might argue in favor of having a deterministic voting rule and a randomized weighting scheme, claiming that it is equivalent because the randomness has simply been shifted from the voting rule to the weights. To that imaginary critic we say that allowing the voting rule to be randomized makes it possible to achieve strategyproofness (see Section~\ref{sec:prem-sc}), which cannot be satisfied by a deterministic voting rule.

The next theorem shows 
that for any voting rule that is a distribution over unilaterals there exist  deterministic weighting schemes that are no-regret. Recall that any randomized positional scoring rule can be represented as a distribution over unilaterals, hence the theorem allows us to design a no-regret weighting scheme for any randomized positional scoring rule. 

The weighting schemes that we use build on Algorithms~\ref{alg:full-rand} and \ref{alg:bandit-rand} directly. In more detail, we consider deterministic weighting schemes that at time $t$ use weight vector $\vec p^t$ and a randomly drawn candidate  $a^t \sim f(\vec \pi_{\vec \sigma^t, \vec p^t})$, where $\vec p^t$ is computed according to Algorithms~\ref{alg:full-rand} or \ref{alg:bandit-rand}. 
The key insight behind these weighting schemes is that, as we will show, if $f$ is a distribution over unilaterals, we have
\begin{equation}\label{eq:dist-unilateral}
\E_{i\sim \vec p^t} [f(\vec \pi_{\vec \sigma^t, \vec e_i})] = f(\vec \pi_{\vec \sigma^t, \vec p^t}),
\end{equation}
where the left-hand side is a vector of expectations. That is, the outcome of the voting rule $f(\vec \pi_{\vec \sigma^t, \vec p^t})$ can be alternatively implemented by applying the voting rule on the ranking of voter $i$ that is drawn at random from the distribution $\vec p^t$.
This is exactly what Algorithms~\ref{alg:full-rand} and \ref{alg:bandit-rand} do.
 Therefore, the deterministic weighting schemes induce the same distribution over alternatives at every time step as their randomized counterparts, and achieve the same regret.

\begin{theorem}
For any voting rule that is a distribution over unilaterals, there exist deterministic weighting schemes with regret of $O(\sqrt{T \ln(n)})$ and  $O(\sqrt{T n \ln(n)})$ in the full-information and partial-information settings, respectively.
\end{theorem}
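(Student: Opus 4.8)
The plan is to reduce the deterministic weighting scheme to its randomized counterpart (Algorithm~\ref{alg:full-rand} or \ref{alg:bandit-rand}) by showing that both induce exactly the same distribution over winning alternatives at every round, so that the regret bounds of Theorems~\ref{thm:full-rand} and \ref{thm:bandit-rand} transfer verbatim. The deterministic scheme at time $t$ plays the weight vector $\vec p^t$ (a genuine weight vector in $\Re^n_{\geq 0}$, not a distribution over single-voter weight vectors), where $\vec p^t$ is computed by exactly the same update rule as in the corresponding randomized algorithm; it then observes a winner $a^t \sim f(\vec \pi_{\vec \sigma^t, \vec p^t})$ and, in the partial-information case, forms the estimator $\tilde{\vec \u}^t$ from $\u^t_{a^t}$ and the value $p^t_{i^t}$ — but now $i^t$ must be reinterpreted. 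I would make this precise as follows.

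First, I would establish the key identity~\eqref{eq:dist-unilateral}: if $f = \sum_k \lambda_k g_k$ is a distribution over anonymous unilaterals $g_k$ with associated functions $h_k : \L(A) \to A$, then for any weight vector $\vec p$ with $\|\vec p\|_1 = 1$,
\[
f(\vec \pi_{\vec \sigma^t, \vec p}) = \sum_k \lambda_k \sum_{\sigma \in \L(A)} (\pi_{\vec \sigma^t, \vec p})_\sigma \vec e_{h_k(\sigma)} = \sum_k \lambda_k \sum_{i=1}^n p_i \vec e_{h_k(\sigma^t_i)} = \sum_{i=1}^n p_i \Big( \sum_k \lambda_k \vec e_{h_k(\sigma^t_i)} \Big) = \sum_{i=1}^n p_i \, f(\vec \pi_{\vec \sigma^t, \vec e_i}),
\]
which is exactly $\E_{i \sim \vec p}[f(\vec \pi_{\vec \sigma^t, \vec e_i})]$. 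The only subtlety is the normalization: one should assume without loss of generality that the weight vectors are normalized to sum to $1$ (the voting rule depends only on the anonymous profile, which is invariant to rescaling), so $\vec p^t$ can be read simultaneously as a weight vector and as a probability distribution over voters. This linearity is precisely the property that distinguishes distributions over unilaterals from general rules — a unilateral looks at one ranking at a time, so mixing weights and mixing voters coincide.

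Second, I would use this identity to couple the two processes. For the full-information setting: the randomized algorithm draws $i^t \sim \vec p^t$ and suffers expected loss $\E_{i^t}[\U_f(\vec \pi_{\vec \sigma^t, \vec e_{i^t}}, \vec \u^t)] = \big(\sum_i p^t_i f(\vec \pi_{\vec \sigma^t, \vec e_i})\big)\cdot \vec \u^t = f(\vec \pi_{\vec \sigma^t, \vec p^t}) \cdot \vec \u^t = \U_f(\vec \pi_{\vec \sigma^t, \vec p^t}, \vec \u^t)$, which is exactly the expected loss of the deterministic scheme; moreover both algorithms use only $\vec \sigma^{1:t}, \vec \u^{1:t}$ to form $\vec p^{t+1}$, so the sequences $\vec p^t$ coincide. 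Hence the deterministic scheme's expected regret equals that of Algorithm~\ref{alg:full-rand}, namely $O(\sqrt{T \ln n})$. For the partial-information setting the coupling is slightly more delicate: after observing $a^t \sim f(\vec \pi_{\vec \sigma^t, \vec p^t})$, the deterministic scheme should mimic Algorithm~\ref{alg:bandit-rand}'s estimator. Here I would note that drawing $a^t$ directly from $f(\vec \pi_{\vec \sigma^t, \vec p^t})$ has the same law as first drawing $i^t \sim \vec p^t$ and then $a^t \sim f(\vec \pi_{\vec \sigma^t, \vec e_{i^t}})$ — that is exactly identity~\eqref{eq:dist-unilateral} at the level of distributions — so one can still define $i^t$ as the latent voter sampled from $\vec p^t$ and set $\tilde \u^t_{i^t} = \u^t_{a^t}/p^t_{i^t}$; Lemmas~\ref{lem:estimate} and \ref{lem:l2} then apply unchanged, and the analysis in the proof of Theorem~\ref{thm:bandit-rand} gives $O(\sqrt{T n \ln n})$.

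The main obstacle is the partial-information case: one must make sure the deterministic scheme can actually be implemented — it must be able to carry out the update using only the feedback $(\vec \sigma^t, a^t, \u^t_{a^t})$ it legitimately receives. Since $\vec p^t$ is a deterministic function of the history (not a coin flip), and since we may internally sample the latent index $i^t \sim \vec p^t$ conditioned on the observed winner $a^t$ — or, more cleanly, observe that we do not even need $i^t$ itself, only the quantity $\tilde{\vec \U}^t$, whose conditional expectation given the history is determined by $\vec p^t$ and the observed loss — the estimator and all subsequent $\vec p^{t+1}$ are well-defined from admissible information. I would spell out that the "randomness" left in the deterministic scheme is confined to the voting rule's own internal coin (drawing $a^t$), which is allowed, and that re-deriving $\vec p^{t+1}$ from $\tilde{\vec\U}^t$ is deterministic. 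Once this bookkeeping is in place, the regret bounds follow immediately from Theorems~\ref{thm:full-rand} and~\ref{thm:bandit-rand} with no new calculation.
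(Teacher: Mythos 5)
Your proposal is correct and follows essentially the same route as the paper: establish the linearity identity~\eqref{eq:dist-unilateral} for distributions over unilaterals by expanding the unilateral structure, and then transfer the regret bounds of Theorems~\ref{thm:full-rand} and~\ref{thm:bandit-rand} by observing that the deterministic scheme playing $\vec p^t$ induces, at every round, the same distribution over winners as the randomized scheme sampling $i^t \sim \vec p^t$. Your additional care about implementing the partial-information estimator from the observable $(\vec \sigma^t, a^t, \u^t_{a^t})$ alone --- via the latent-voter coupling or, more cleanly, the conditional-expectation estimator --- addresses a bookkeeping point the paper leaves implicit, and your resolution is sound.
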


\begin{proof}
Let $f$ be a distribution over unilaterals $g_1, \dots, g_k$ with corresponding probabilities $q_1, \dots, q_k$. Also, let $h_j:\L(A)\rightarrow A$ denote the function corresponding to $g_j$, for $j \in [k]$. We first prove Equation~\eqref{eq:dist-unilateral}. For ease of exposition we suppress $t$ in the notations, when it is clear from the context. Furthermore, let  $\vec \pi^i = \vec \pi_{\vec \sigma^t, \vec e_i}$. It holds that
\[ \E_{i\sim \vec p^t}\left[ f(\vec \pi_{\vec \sigma^t, \vec e_i}) \right] = \sum_{i=1}^n p^t_i f(\vec \pi^i)
=  \sum_{i=1}^n p^t_i \sum_{j=1}^k q_j \sum_{\tau\in \L(A)} \pi^i_\tau \vec e_{h_j(\tau)}  = \sum_{i=1}^n p^t_i \sum_{j=1}^k q_j  \vec e_{h_j(\sigma_i)},
\]
where the last equality follows by the fact that $\pi^i_{\sigma_i} =1$ and $\pi^i_{\tau} =0$ for any $\tau \neq \sigma_i$. 
Moreover, let $\vec \pi = \vec \pi_{\vec \sigma^t, \vec p^t}$, then
\[ f(\vec \pi_{\vec \sigma^t, \vec p^t}) = \sum_{j=1}^k  q_j  \sum_{\tau\in\L(A)} \pi_\tau \vec e_{h_j(\tau)}
= \sum_{j=1}^k  q_j  \sum_{\tau\in\L(A)} \vec e_{h_j(\tau)} \sum_{i=1}^n p^t_i \mathds{1}_{(\sigma_i = \tau)} 
= \sum_{i=1}^n p^t_i \sum_{j=1}^k  q_j  \vec e_{h_j(\sigma_i)}.
\]

Now that we have established Equation~\eqref{eq:dist-unilateral}, we use it to conclude that
\begin{align*}
\sum_{t=1}^T \U_f(\vec \pi_{\vec \sigma^t, \vec p^t}, \vec \u^t)
-  \min_{i\in[n]}  \sum_{t=1}^T \U_f(\vec \pi_{\vec \sigma^t, \vec e_{i}}, \vec \u^t)
= 
 \E\left[\sum_{t=1}^T \U_f(\vec \pi_{\vec \sigma^t, \vec e_{i}}, \vec \u^t)
-  \min_{i\in[n]}  \sum_{t=1}^T \U_f(\vec \pi_{\vec \sigma^t, \vec e_{i}}, \vec \u^t) \right], 
\end{align*}
where the expectation is taken over choice of $i\sim \vec p^t$ for all $t$. Therefore, the deterministic weighting schemes that use weight vector $\vec p^t$ achieve the same regret bounds as those established in Theorems~\ref{thm:full-rand} and \ref{thm:bandit-rand}.
\end{proof}

We have seen that there exist no-regret deterministic weighting schemes for any voting rule that is a distribution over unilaterals. It is natural to ask whether being a distribution over unilaterals is, in some sense, also a necessary condition. While we do not give a complete answer to this question, we are able to identify a sufficient condition for \emph{not} having no-regret deterministic weighting schemes. 

Recall the definitions of Condorcet winner and Condorcet consistency, introduced in Section~\ref{sec:prem-sc}. Here we extend the notion of Condorcet consistency to randomized rules. 

\begin{definition}
For a set of alternatives $A$ such that $|A|=m$, a randomized voting rule $f:\Delta(\L(A))\rightarrow \Delta(A)$ is \emph{probabilistically Condorcet consistent with gap $\delta(m)$} if for any anonymous vote profile $\vec \pi$ that has a Condorcet winner $a$, and for all alternatives $x\in A\setminus\{a\}$,
$$f(\vec \pi)_a \geq f(\vec \pi)_x + \delta(m).$$
\end{definition}
In words, a randomized voting rule is probabilistically Condorcet consistent if the Condorcet winner has strictly higher probability of being selected than any other alternative, by a gap of $\delta(m)$. As an example, the randomized Copeland rule is probabilistically Condorcet consistent with $\delta(m) = \Omega(1/m^2)$. To see why, note that for any vote profile $\vec \pi$, 
$$\sum_{a\in A}\cscore_{\vec \pi}(a)=\binom{m}{2},
$$ 
a Condorcet winner $b$ has $\cscore_{\vec \pi}(b)=m-1$, and any other alternative has score at most $m-2$. Therefore, $b$ has probability $2/m$, and any other alternative has probability at most $\frac{2(m-2)}{m(m-1)}$. Hence, we have a gap of $\frac{2}{m(m-1)}$ for the randomized Copeland rule. Also note that any deterministic voting rule that is (probabilistically) Condorcet consistent has a gap of $\delta(m) = 1$.

\begin{theorem}
\label{thm:ConLB}
For a set of alternatives $A$ such that $|A|=m$, let $f$ be a probabilistically Condorcet consistent voting rule with gap $\delta(m)$, and suppose there are $n$ voters for $n \geq 2\left(\frac{3}{2\delta(m)} + 1\right)$. Then any deterministic weighting scheme  will suffer regret of $\Omega(T)$ under $f$ (in the worst case), even in the full information setting.
\end{theorem}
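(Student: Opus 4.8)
The plan is to construct, for an arbitrary deterministic weighting scheme, an adaptive adversary that — even with full-information feedback — forces $\E[Reg_T]=\Omega(\delta(m)\cdot T)$; since $\delta(m)$ is a fixed function this is $\Omega(T)$ and violates the $\poly(n,m)\cdot o(T)$ benchmark. Write $\delta=\delta(m)$. The only property of $f$ I will use is obtained by summing the gap inequalities: whenever $\vec\pi$ has a Condorcet winner $a$, $f(\vec\pi)_a\ge f(\vec\pi)_x+\delta$ for all $x\ne a$. The heart of the argument is a per-round gadget. Fix a ranking $\rho^\circ\in\L(A)$ with top alternative $z$. Suppose at round $t$ the adversary produces a profile whose anonymous profile $\vec\pi^t:=\vec\pi_{\vec\sigma^t,\vec w^t}$ has a Condorcet winner $a^t\ne z$. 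Then $f(\vec\pi^t)_{a^t}\ge f(\vec\pi^t)_z+\delta$, and since $z$ is the Condorcet winner of $\vec e_{\rho^\circ}$ also $f(\vec e_{\rho^\circ})_z\ge f(\vec e_{\rho^\circ})_{a^t}+\delta$; adding and regrouping gives $\bigl(f(\vec\pi^t)_{a^t}-f(\vec e_{\rho^\circ})_{a^t}\bigr)+\bigl(f(\vec e_{\rho^\circ})_z-f(\vec\pi^t)_z\bigr)\ge 2\delta$, hence $\|f(\vec\pi^t)-f(\vec e_{\rho^\circ})\|_1\ge 2\delta$. Taking the loss vector $\vec\u^t$ with $\u^t_x=1$ exactly on the coordinates where $f(\vec\pi^t)_x>f(\vec e_{\rho^\circ})_x$ then yields $\U_f(\vec\pi^t,\vec\u^t)-\U_f(\vec e_{\rho^\circ},\vec\u^t)=\tfrac12\|f(\vec\pi^t)-f(\vec e_{\rho^\circ})\|_1\ge\delta$: the weighting scheme does at least $\delta$ worse than \emph{any} voter who is casting $\rho^\circ$ at round $t$.

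Globally, I would split the $n\ge 2\bigl(\tfrac{3}{2\delta}+1\bigr)$ voters into two teams $P_1,P_2$, each of size at least $\tfrac{3}{2\delta}+1$. At round $t$ the adversary reads $\vec w^t$; assume without loss of generality $\sum_{i\in P_1}w^t_i\ge\sum_{i\in P_2}w^t_i$, so $P_1$ carries at least half the weight (break an exact tie by flipping one voter). The adversary has every voter in $P_1$ cast one fixed ranking $\mu$ with some $a^t\ne z$ on top and every voter in $P_2$ cast $\rho^\circ$; since $P_1$ alone has weight $\ge\tfrac12$ and is the only team ranking $a^t$ above $z$, $a^t$ is the Condorcet winner of $\vec\pi^t$, so the gadget applies with $P_2$ as the reference voters. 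When $P_2$ is (weakly) heavier the roles are swapped. Because all voters of a team vote identically each round, each team has a single cumulative loss $C_1,C_2$, and the best voter in hindsight has loss $\min(C_1,C_2)$, so $Reg_T=\max\bigl(\sum_t\U_f(\vec\pi^t,\vec\u^t)-C_1,\ \sum_t\U_f(\vec\pi^t,\vec\u^t)-C_2\bigr)$. Writing $\tau_j$ for the number of rounds on which $P_j$ is the heavy/sacrificed team, the gadget gives $\sum_t\U_f(\vec\pi^t,\vec\u^t)-C_j\ge\delta\,\tau_{3-j}-(\text{correction}_j)$, and hence $Reg_T\ge\delta\max(\tau_1,\tau_2)-(\text{correction})\ge\tfrac{\delta T}{2}-(\text{correction})$.

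The correction term accounts for the rounds on which a team is the sacrificed coalition: those voters themselves rank the punished $a^t$ near the top and may out-perform the scheme — but by at most $1-\delta$ on any such round, since the gadget already forces the scheme's loss there to be at least $\delta=\tfrac12\|f(\vec\pi^t)-f(\vec e_{\rho^\circ})\|_1$. The step I expect to be the main obstacle is showing this correction is dominated by $\tfrac{\delta T}{2}$, and this is precisely where the hypothesis $n\ge 2\bigl(\tfrac{3}{2\delta}+1\bigr)$ is used: one keeps the per-round sacrifice restricted to a small coalition of heavy voters whose combined $f$-profile stays close to the profile $\vec\pi^t$ actually played (so the sacrificed voters' loss tracks, rather than undercuts, the scheme's loss), and the residual slack, amortized over the $\ge\tfrac{3}{2\delta}+1$ members of each team and weighted by $1-\delta$, comes out strictly below $\tfrac{\delta T}{2}$. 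Carrying out these estimates gives $Reg_T=\Omega(\delta(m)\,T)=\Omega(T)$. Finally, the same sequence works verbatim in the partial-information setting: the adversary's choices depend only on $\vec w^t$ and on $f$, never on the scheme's realized outcomes.
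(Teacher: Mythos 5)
Your per-round gadget is fine as far as it goes: from the two gap inequalities you correctly get $\tfrac12\|f(\vec\pi^t)-f(\vec e_{\rho^\circ})\|_1\geq\delta$, and the indicator loss on $\{x: f(\vec\pi^t)_x>f(\vec e_{\rho^\circ})_x\}$ does force the scheme to lose at least $\delta$ more than any voter casting $\rho^\circ$. But the argument has a genuine gap exactly where you flag it, and the fix you sketch does not work with the construction you set up. With two \emph{fixed} teams of size $\approx n/2$, the sacrificed (heavy-in-weight) team always comprises half the voters \emph{in cardinality}, no matter how the weight is distributed; your gadget says nothing about that team's loss relative to the scheme's, because $f$ is an arbitrary probabilistically Condorcet consistent rule and $f(\vec e_\mu)$ need bear no relation to $f(\vec\pi^t)$ on the mixture $\vec\pi^t=W_1\vec e_\mu+(1-W_1)\vec e_{\rho^\circ}$. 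Against a weighting scheme that alternates which team is heavy, you need $(\mathrm{sch}_t-\mathrm{light}_t)+(\mathrm{sch}_t-\mathrm{heavy}_t)\geq\Omega(\delta)$, i.e.\ $2f(\vec\pi^t)(S)\geq f(\vec e_\mu)(S)+f(\vec e_{\rho^\circ})(S)+\Omega(\delta)$, and nothing in your construction enforces this; the trivial bound $\mathrm{heavy}_t\geq\mathrm{sch}_t-(1-\delta)$ yields a regret bound of $\tfrac T2(2\delta-1)$, which is negative. Your proposed remedy --- ``keep the per-round sacrifice restricted to a small coalition of heavy voters'' --- is inconsistent with the fixed half-and-half teams and is precisely the missing idea, not a detail.

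The paper resolves this in two ways you would need to import. First, the coalition $N_1^t$ that gets the ``bad'' ranking is not fixed: it is the shortest prefix of the voters sorted by weight whose total weight exceeds $\tfrac12\|\vec w^t\|_1$. A simple sorted-sum lemma then shows that either $|N_1^t|\leq\frac{3}{2\delta}+1$ (the coalition overshot half the weight, so it is tiny in cardinality and the $\geq n/2$ remaining voters all cast the good ranking) or $|N_1^t|<n(\tfrac12+\tfrac\delta3)$ (the split is nearly balanced in cardinality); this is where $n\geq 2(\frac{3}{2\delta}+1)$ enters. Second, rather than comparing only to the light voters, the paper compares the scheme to the \emph{average} voter, using a fixed loss function ($\ell_a=1$, $\ell_b=0$, $\ell_x=\tfrac12$ elsewhere) and choosing which of the two alternatives plays the role of $a$ so that $f(\vec e_{\tau^{a,b}})_a-f(\vec e_{\tau^{a,b}})_b\leq f(\vec e_{\tau^{b,a}})_b-f(\vec e_{\tau^{b,a}})_a$; this caps the heavy coalition's per-round advantage by the light coalition's, so the average voter's loss is at most $\tfrac12+\tfrac{\delta}{3}$ while the scheme's is at least $\tfrac12+\tfrac{\delta}{2}$. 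Without an analogue of both of these steps your correction term is not controlled, so the proposal as written does not establish the theorem.
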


We will require the following trivial lemma.

\begin{lemma} \label{sorted-sum}
Let $x_1, x_2, \cdots x_n$ be $n$ real numbers such that $x_i\geq x_{i+1}$ for all $i\in [n-1]$, and denote $S=\sum_{i=1}^n x_i$. Then for any $j\in [n]$,
$\sum_{i=1}^j x_i \geq j \frac{S}{n}$.
\end{lemma}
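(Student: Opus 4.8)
The plan is to prove the equivalent statement that the average of the top $j$ entries is at least the overall average $S/n$; multiplying through by $j$ gives exactly the claimed bound. The one structural fact I will lean on is that, because the sequence is sorted in weakly decreasing order, every one of the first $j$ entries dominates every one of the remaining $n-j$ entries, with the pivot value $x_j$ separating the two groups.

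Concretely, I would split the total as $S = T + B$, where $T = \sum_{i=1}^{j} x_i$ is the top sum I want to lower-bound and $B = \sum_{i=j+1}^{n} x_i$ is the bottom sum. The target inequality $T \geq jS/n$ is equivalent, after clearing the denominator, to $T(n-j) \geq jB$. I would establish this by sandwiching both sides around the common quantity $j(n-j)x_j$: since $x_i \geq x_j$ for every $i \leq j$, we get $T \geq j x_j$ and hence $T(n-j) \geq j(n-j)x_j$; and since $x_i \leq x_{j+1} \leq x_j$ for every $i > j$, we get $B \leq (n-j)x_j$ and hence $jB \leq j(n-j)x_j$. Chaining these yields $T(n-j) \geq j(n-j)x_j \geq jB$, and rearranging $T(n-j) \geq jB$ into $Tn \geq j(T+B) = jS$ gives the conclusion.

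There is essentially no substantive obstacle, as the lemma is elementary; the only point needing a moment of care is the boundary case $j = n$, where the bottom group is empty ($B = 0$ and $n-j = 0$), the inequality $T(n-j) \geq jB$ degenerates to $0 \geq 0$, and the conclusion holds with equality $T = S = jS/n$. As intuition I would also note the equivalent convex-combination viewpoint for $j < n$: the overall mean $S/n$ is a weighted average of the top-group mean $T/j$ and the bottom-group mean $B/(n-j)$, and because every top entry weakly exceeds every bottom entry we have $T/j \geq x_j \geq x_{j+1} \geq B/(n-j)$, so $S/n$ cannot exceed the larger mean $T/j$, again giving $T \geq jS/n$ immediately.
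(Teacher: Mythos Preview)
Your argument is correct. You prove the inequality directly by pivoting on $x_j$: from the sorted order you extract $T \geq j x_j$ and $B \leq (n-j)x_j$, which together give $(n-j)T \geq j(n-j)x_j \geq jB$, equivalent to the claim; you also handle the degenerate case $j=n$ cleanly. The paper instead argues by contradiction: assuming $\sum_{i=1}^j x_i < jS/n$ forces some $x_i$ with $i\le j$ to lie strictly below the mean $S/n$, while the complementary inequality $\sum_{i=j+1}^n x_i > (n-j)S/n$ forces some $x_{i'}$ with $i'>j$ to lie strictly above $S/n$, contradicting $x_i \geq x_{i'}$. Both routes are elementary and of comparable length; your direct pivot argument is slightly more constructive and even yields the quantitative sandwich $T/j \geq x_j \geq B/(n-j)$, whereas the paper's contradiction is marginally shorter to state but gives no extra information.
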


\begin{proof}
Assume for the sake of contradiction that there exists $j\in[n-1]$ such that $\sum_{i=1}^j x_i < j \frac{S}{n}$. It follows that there is $i\in [j]$ such that $x_i < \frac{S}{n}$. In addition, it must be the case that  $\sum_{i=j+1}^n x_i > (n-j) \frac{S}{n}$, which implies that there is $i'\in\{j+1,\ldots,n\}$ such that $x_{i'} > \frac{S}{n}$. This contradicts the fact that $x_i\geq x_{i'}$.
\end{proof}

\begin{proof}[Proof of Theorem~\ref{thm:ConLB}]
Fix an arbitrary deterministic weighting scheme. We will show that the loss of this weighting scheme is strictly higher than the average loss of the voters (for appropriately chosen vote profiles and loss functions) at every time step $t$, which directly leads to linear regret. 

Consider an arbitrary time step $t \leq T$, and let $\vec w^t$ denote the weights chosen by the weighting scheme. To construct the vote profile $\vec \sigma^t$, the adversary first partitions the voters into two sets $N_1^t$ and $N_2^t$, as follows: It sorts the weights $\vec w^t$ in non-increasing order, and then it adds voters to $N_1^t$ by their sorted weight (largest to smallest) until 
$$
W^t_1 \triangleq \sum_{i \in N_1^t} w_i^t > \frac{1}{2} \|\vec w^t\|_1,
$$ 
that is, until the voters in $N_1^t$ have more than half the total weight. The remaining voters form set $N_2^t$. 

Now, let $\tau^{x,y}\in \L(A)$ denote a ranking that places $x$ at the top (i.e., $\tau^{x,y}(x) = 1$) and $y$ in second place (i.e., $\tau^{x,y}(y) = 2$). Let $a$ and $b$ be two alternatives such that $f(\vec e_{\tau^{b,a}})_b - f(\vec e_{\tau^{b,a}})_a \geq f(\vec e_{\tau^{a,b}})_a - f(\vec e_{\tau^{a,b}})_b$, i.e., the gap between the probabilities of picking the top two alternatives in $\vec e_{\tau^{b,a}}$ is at least the corresponding gap in $\vec e_{\tau^{a,b}}$. The adversary sets the vote profile $\vec \sigma^t$ such that $\sigma_i^t = \tau^{a,b}$ for all $i \in N_1^t$ and $\sigma_i^t = \tau^{b,a}$ for all $i \in N_2^t$. Also, it sets the loss function $\vec \ell^t$ to be $\ell^t_a=1$, $\ell^t_b=0$, and $\ell^t_x=1/2$ for all $x\in A\setminus \{a,b\}$. 

Observe that for all $i\in N_1^t$, $a\succ_{\sigma_i} x$ for all $x\in A\setminus\{a\}$. Since the total weight of voters in $N_1^t$ is more than $1/2$, $a$ is a Condorcet winner in $\vec \pi_{\vec \sigma^t, \vec w^t}$. Therefore, because $f$ is probabilistically Condorcet consistent with gap $\delta(m)$, it holds that
$$f(\vec \pi_{\vec \sigma^t, \vec w^t})_a \geq f(\vec \pi_{\vec \sigma^t, \vec w^t})_b + \delta(m).$$
It follows that the loss of the weighting scheme is
\begin{equation}
\label{eq:schemeloss}
\begin{split}
L_f(\vec \pi_{\vec \sigma^t, \vec w^t}, \vec \ell^t) &= 1 \cdot f(\vec \pi_{\vec \sigma^t, \vec w^t})_a + \frac{1}{2} \cdot \left(1 - f(\vec \pi_{\vec \sigma^t, \vec w^t})_a - f(\vec \pi_{\vec \sigma^t, \vec w^t})_b\right)\\
&= \frac{1}{2} + \frac{1}{2} \left(f(\vec \pi_{\vec \sigma^t, \vec w^t})_a - f(\vec \pi_{\vec \sigma^t, \vec w^t})_b\right)\\
 &\geq \frac{1}{2} + \frac{1}{2} \delta(m).
\end{split}
\end{equation}
Similarly, the loss of voter $i$ is
\begin{equation}
\label{eq:voterloss}
L_f(\vec \pi_{\vec \sigma^t, \vec e_i}, \vec \ell^t) = L_f(\vec e_{\sigma_i^t}, \vec \ell^t) = \frac{1}{2} + \frac{1}{2} \left(f(\vec e_{\sigma_i^t})_a - f(\vec e_{\sigma_i^t})_b \right).
\end{equation}

Let $\vec q^1$ denote $f(\vec e_{\tau^{a,b}})$, i.e. the distribution over the alternatives for the votes of voters in $N_1^t$, and let $\vec q^2$ denote $f(\vec e_{\tau^{b,a}})$, i.e. the distribution over the alternatives for the votes of voters in $N_2^t$. Using these notations and Equation~\eqref{eq:voterloss}, the loss of a voter $i\in N_1^t$ is
$$L_f(\vec \pi_{\vec \sigma^t, \vec e_i}, \vec \ell^t) = \frac{1}{2} + \frac{1}{2} \left( q^1_a - q^1_b \right),$$
and the loss of a voter $i\in N_2^t$ is
$$L_f(\vec \pi_{\vec \sigma^t, \vec e_i}, \vec \ell^t) = \frac{1}{2} + \frac{1}{2} \left(q^2_a - q^2_b\right) = \frac{1}{2} - \frac{1}{2} (q^2_b - q^2_a).$$
Hence, the average loss over all voters is
\begin{align*}
L_{avg}^t &=  \frac{|N_1^t| \left(\frac{1}{2} + \frac{1}{2} \left( q^1_a - q^1_b \right)\right) + (n-|N_1^t|) \left(\frac{1}{2} - \frac{1}{2} (q^2_b - q^2_a)\right)}{n}\\
&= \frac{1}{2} + \frac{1}{2n} \left(|N_1^t|(q^1_a - q^1_b) - (n-|N_1^t|)(q^2_b - q^2_a) \right).
\end{align*}
But we chose $a$ and $b$ such that $q^1_a - q^1_b \leq q^2_b - q^2_a$. We conclude that
\begin{equation}
\label{eq:lossavg}
\begin{split}
L_{avg}^t &\leq \frac{1}{2} + \frac{1}{2n} \left(|N_1^t|(q^2_b - q^2_a) - (n-|N_1^t|)(q^2_b - q^2_a) \right)\\
&= \frac{1}{2} + \frac{1}{2} (q^2_b - q^2_a) \frac{(2|N_1^t| - n)}{n}.
\end{split}
\end{equation}

Our goal is to derive an upper bound on the expression $\frac{1}{2} (q^2_b - q^2_a) \frac{(2|N_1^t| - n)}{n}$. Specifically, we wish to prove that 
\begin{equation}
\label{eq:cases}
\frac{1}{2} (q^2_b - q^2_a) \frac{(2|N_1^t| - n)}{n}\leq \frac{\delta(m)}{3}.
\end{equation}
We do this by examining two cases. 

\paragraph{Case 1: $W^t_1 \geq \left(\frac{1}{2} + \frac{\delta(m)}{3}\right) \|\vec w^t\|_1$.}
Informally, this is the case when the weights of $N_1^t$  overshot $\|\vec w^t\|_1/2$ by a fraction of at least $\delta(m)/3$. This means that the last voter added to $N_1^t$ has a weight of at least $W^t_1 - \frac{\|\vec w^t\|_1}{2}$. Since the weights were added in non-increasing order, it follows that each voter in $N_1^t$ has a weight of at least $W^t_1 - \frac{\|\vec w^t\|_1}{2}$. Therefore, 
$$W^t_1 = \sum_{i \in N_1^t} w_i^t \geq \sum_{i \in N_1^t} \left(W^t_1 - \frac{\|\vec w^t\|_1}{2}\right) = |N_1^t| \left(W^t_1 - \frac{\|\vec w^t\|_1}{2}\right),$$
or equivalently,
\begin{equation}
\label{eq:size}
|N_1^t| \leq \frac{1}{1-\frac{\|\vec w^t\|_1}{2 W^t_1}}.
\end{equation}
We have also assumed that $\frac{W^t_1}{\|\vec w^t\|_1} \geq \left(\frac{1}{2} + \frac{\delta(m)}{3}\right)$. Using Equation~\eqref{eq:size}, we obtain
\begin{equation}
\label{eq:n1}
|N_1^t| \leq\frac{1}{1-\frac{1}{1 + \frac{2\delta(m)}{3}}}= \frac{3}{2\delta(m)} + 1.
\end{equation}

Let us now examine the expression on the left-hand side of Equation~\eqref{eq:cases}. 
Note that $b$ is a Condorcet winner in $\vec e_{\tau^{b,a}}$. Hence, $q^2_b \geq q^2_a + \delta(m)$, and, in particular, $q^2_b - q^2_a > 0$. In addition, we have assumed that $n \geq 2(\frac{3}{2\delta(m)} + 1)$, which implies (by Equation~\eqref{eq:n1}) that $n \geq 2|N_1^t|$. It follows that 
$$
\frac{1}{2} (q^2_b - q^2_a) \frac{(2|N_1^t| - n)}{n}\leq 0\leq \frac{\delta(m)}{3},
$$
thereby establishing Equation~\eqref{eq:cases} for this case. 

\paragraph{Case 2: $W^t_1 < \left(\frac{1}{2} + \frac{\delta(m)}{3}\right) \|\vec w^t\|_1$}
Since $N_1^t$ contains voters who have the largest $|N_1^t|$ weights, Lemma~\ref{sorted-sum} implies that
$$W^t_1 = \sum_{i \in N_1^t} w_i^t \geq |N_1^t| \frac{\|\vec w^t\|_1}{n}.$$
We have also assumed that $W^t_1 < (\frac{1}{2} + \frac{\delta(m)}{3}) \|\vec w^t\|_1$. Combining the last two inequalities, we obtain 
\begin{equation}
\label{eq:n1again}
|N_1^t| < n \left(\frac{1}{2} + \frac{\delta(m)}{3}\right).
\end{equation}

Let us examine, once again, the left-hand side of Equation~\eqref{eq:cases}. Recall that $q^2_b - q^2_a > 0$, because $b$ is a Condorcet winner in $\tau^{b,a}$. So, if $2|N_1^t| - n \leq 0$, then Equation~\eqref{eq:cases} clearly holds, as in Case 1. And if $2|N_1^t| - n > 0$, the equation also holds, because
\begin{align*}
\frac{1}{2} (q^2_b - q^2_a) \frac{(2|N_1^t| - n)}{n}
\leq \frac{1}{2} \cdot 1 \cdot \frac{(2|N_1^t| - n)}{n} = \frac{|N_1^t|}{n}-\frac{1}{2}< \frac{\delta(m)}{3},
\end{align*}
where the last inequality follows from Equation~\eqref{eq:n1again}.

\medskip

To complete the proof, we combine Equations \eqref{eq:schemeloss}, \eqref{eq:lossavg}, and \eqref{eq:cases}, to obtain
$$
L_f(\vec \pi_{\vec \sigma^t, \vec w^t}, \vec \ell^t) \geq L_{avg}^t + \frac{\delta(m)}{6}.
$$
The best voter in hindsight incurs loss that is at most as high as the average voter. Therefore, the overall regret is
\begin{align*}
Reg_T &= \sum_{t=1}^T L_f(\vec \pi_{\vec \sigma^t, \vec w^t}, \vec \ell^t) - \min_i \sum_{t=1}^T L_f(\vec \pi_{\vec \sigma^t, \vec e_i}, \vec \ell^t)\\
&\geq \sum_{t=1}^T L_f(\vec \pi_{\vec \sigma^t, \vec w^t}, \vec \ell^t) - \sum_{t=1}^T L_{avg}^t\\
&\geq T \frac{\delta(m)}{6}.
\end{align*}
In words, the weighting scheme suffers linear regret.
\end{proof}

It is interesting to note that Theorems~\ref{thm:det-voting-wt} and \ref{thm:ConLB} together imply that distributions over unilaterals are not probabilistically Condorcet consistent. This is actually quite intuitive: Distributions over unilaterals are ``local'' in that they look at each voter separately, whereas Condorcet consistency is a global property. In fact, these theorems can be used to prove --- in an especially convoluted and indirect way --- a simple result from social choice theory~\cite{Moul83}: No positional scoring rule is Condorcet consistent!  
\section{Discussion}

We conclude by discussing several conceptual points.

\paragraph{Changing the sets of alternatives and voters over time} We wish to emphasize that the set of alternatives at each time step, i.e., in each election, can be completely different. Moreover, the \emph{number} of alternatives could be different. In fact, our positive results do not even depend on the number of alternatives $m$, so we can simply set $m$ to be an upper bound. By contrast, we do need the set of voters to stay fixed throughout the process, but this is consistent with our motivating examples (e.g., a group of partners in a small venture capital firm would face different choices at every time step, but the composition of the group rarely changes).

\paragraph{Optimizing the voting rule} Throughout the paper, the voting rule is exogenous. One might ask whether it makes sense to optimize the choice of voting rule itself, in order to obtain good no-regret learning results. Our answer is ``yes and no''. On the one hand, we believe our results do give some guidance on choosing between voting rules. For example, from this viewpoint, one might prefer randomized Borda (which admits no-regret algorithms under a deterministic weighting scheme) to randomized Copeland (which does not). On the other hand, many considerations are factored into the choice of voting rule: social choice axioms, optimization of additional objectives~\cite{PSZ16,BCHL+15,EFS09,CS05b}, and simplicity. It is therefore best to think of our approach as \emph{augmenting} voting rules that are already in place. 

\paragraph{A natural, harder benchmark} In our model (see Section~\ref{sec:problem}), we are competing with the best voter in hindsight. But our action space consists of \emph{weight vectors}. It is therefore natural to ask whether we can compete with the best weight vector in hindsight. Clearly this alternative benchmark is at least as hard, because the best voter $i^*$ corresponds to the weight vector $\mathbf{e}_{i^*}$. Informally, the alternative benchmark is strictly harder if the voting rule does not nicely decompose across voters (like distributions over unilaterals do). We can prove some positive results for the alternative benchmark under specific voting rules (such as randomized Copeland) and specific families of weight vectors; but properly dealing with it largely remains an open problem.

\subsection*{Acknowledgments}
The authors were partially supported by the National Science Foundation under grants IIS-1350598 and CCF-1525932, by the Office of Naval Research, and by a Sloan Research Fellowship. Haghtalab was partially supported by a Microsoft Research PhD Fellowship.

\bibliographystyle{plainnat}
\bibliography{abb,ultimate}

\end{document}